\newcommand{\cH}{\mathcal{H}}
\newcommand{\cL}{\mathcal{L}}
\newcommand*{\dI}{\mathbb{I}}
\newcommand*{\cD}{\mathcal{D}}
\newcommand*{\cN}{\pazocal{N}}
\newcommand*{\cM}{\mathcal{M}}
\newcommand*{\eps}{\varepsilon}
\newcommand*{\Haar}{{\rm Haar}}
\newcommand*{\unif}{{\rm Uniform}}
\newcommand*{\ptr}[2]{\mathrm{Tr}_{#1}\left[#2\right]}
\newcommand*{\proj}[1]{\ketbra{#1}}
\newcommand*{\prob}[1]{\mathbb{P}\left(#1 \right)}
\newcommand*{\ex}[1]{\mathbb{E}\left[#1 \right]}
\newcommand{\C}{\mathbb{C}}
\let\coloneqq\relax
\newcolumntype{x}[1]{>{\centering\arraybackslash}p{#1}}
\newtheorem{thm}{Theorem}
\newtheorem*{thm*}{Theorem}
\newtheorem*{prop*}{Proposition}
\newtheorem{lemma}[thm]{Lemma}
\newtheorem*{lemma*}{Lemma}
\newtheorem*{cor*}{Corollary}
\newtheorem*{cj*}{Conjecture}
\newtheorem*{Def*}{Definition}
\newtheorem*{question*}{Question}
\newtheorem*{problem*}{Problem}
\def\thmhead@plain#1#2#3{%
  \thmname{#1}\thmnumber{\@ifnotempty{#1}{ }\@upn{#2}}%
  \thmnote{ {\the\thm@notefont#3}}}
\let\thmhead\thmhead@plain
\theoremstyle{definition}
\newcommand{\bb}{\begin{equation}\begin{aligned}\hspace{0pt}}
\newcommand{\bbb}{\begin{equation*}\begin{aligned}}
\newcommand{\ee}{\end{aligned}\end{equation}}
\newcommand{\eee}{\end{aligned}\end{equation*}}
\newcommand*{\coloneqq}{\mathrel{\vcenter{\baselineskip0.5ex \lineskiplimit0pt \hbox{\scriptsize.}\hbox{\scriptsize.}}} =}
\newcommand{\eqt}[1]{\stackrel{\mathclap{\scriptsize \mbox{#1}}}{=}}
\newcommand{\leqt}[1]{\stackrel{\mathclap{\scriptsize \mbox{#1}}}{\leq}}
\newcommand{\geqt}[1]{\stackrel{\mathclap{\scriptsize \mbox{#1}}}{\geq}}
\newcommand{\ketbra}[1]{\ket{#1}\!\!\bra{#1}}
\renewcommand{\epsilon}{\varepsilon}
\newcommand{\id}{\mathds{1}}
\DeclareMathOperator{\Tr}{Tr}
\DeclareMathAlphabet{\pazocal}{OMS}{zplm}{m}{n}
\newcommand{\EE}{\pazocal{E}}
\newcommand{\lsmatrix}{\left(\begin{smallmatrix}}
\newcommand{\rsmatrix}{\end{smallmatrix}\right)}
\newcommand*\rel@kern[1]{\kern#1\dimexpr\macc@kerna}
\newcommand*\widebar[1]{%
  \begingroup
  \def\mathaccent##1##2{%
    \rel@kern{0.8}%
    \overline{\rel@kern{-0.8}\macc@nucleus\rel@kern{0.2}}%
    \rel@kern{-0.2}%
  }%
  \macc@depth\@ne
  \let\math@bgroup\@empty \let\math@egroup\macc@set@skewchar
  \mathsurround\z@ \frozen@everymath{\mathgroup\macc@group\relax}%
  \macc@set@skewchar\relax
  \let\mathaccentV\macc@nested@a
  \macc@nested@a\relax111{#1}%
  \endgroup
}
\tikzset{meter/.append style={draw, inner sep=10, rectangle, font=\vphantom{A}, minimum width=30, line width=.8, path picture={\draw[black] ([shift={(.1,.3)}]path picture bounding box.south west) to[bend left=50] ([shift={(-.1,.3)}]path picture bounding box.south east);\draw[black,-latex] ([shift={(0,.1)}]path picture bounding box.south) -- ([shift={(.3,-.1)}]path picture bounding box.north);}}}
\tikzset{roundnode/.append style={circle, draw=black, fill=gray!20, thick, minimum size=10mm}}
\tikzset{squarenode/.style={rectangle, draw=black, fill=none, thick, minimum size=10mm}}
\definecolor{Blues5seq1}{RGB}{239,243,255}
\definecolor{Blues5seq2}{RGB}{189,215,231}
\definecolor{Blues5seq3}{RGB}{107,174,214}
\definecolor{Blues5seq4}{RGB}{49,130,189}
\definecolor{Blues5seq5}{RGB}{8,81,156}
\definecolor{Greens5seq1}{RGB}{237,248,233}
\definecolor{Greens5seq2}{RGB}{186,228,179}
\definecolor{Greens5seq3}{RGB}{116,196,118}
\definecolor{Greens5seq4}{RGB}{49,163,84}
\definecolor{Greens5seq5}{RGB}{0,109,44}
\definecolor{Reds5seq1}{RGB}{254,229,217}
\definecolor{Reds5seq2}{RGB}{252,174,145}
\definecolor{Reds5seq3}{RGB}{251,106,74}
\definecolor{Reds5seq4}{RGB}{222,45,38}
\definecolor{Reds5seq5}{RGB}{165,15,21}
\newenvironment{boxedthm}[1]%
	{\expandafter\ifstrequal\expandafter{#1}{orange}{\begin{tcolorbox}[colback=red!15,colframe=orange!15,breakable,enhanced]}{\begin{tcolorbox}[colback=Blues5seq1,colframe=Blues5seq5,breakable,enhanced]}}%
	{\end{tcolorbox}}
\renewcommand{\EE}[1]{\underset{\scaleobj{.8}{#1}}{\mathds{E}\,}}
\begin{document}

\author[1]{Aadil Oufkir\thanks{\texttt{aadil.oufkir@gmail.com}}}
\affil[1]{{Mohammed VI Polytechnic University, Rocade Rabat-Salé, Technopolis, Morocco}}
\author[2]{Filippo Girardi\thanks{\texttt{filippo.girardi@sns.it}}}
\affil[2]{Scuola Normale Superiore, Piazza dei Cavalieri 7, 56126 Pisa, Italy}

\title{Improved Lower Bounds for Learning Quantum Channels in Diamond Distance}

\date{}
\setcounter{Maxaffil}{0}
\renewcommand\Affilfont{\itshape\small}

\maketitle\vspace{-4ex}
\begin{abstract} 

We prove that learning an unknown quantum channel with input dimension $d_A$, output dimension $d_B$, and Choi rank $r$ to diamond distance $\varepsilon$ requires $
\Omega\!\left( \frac{d_A d_B r}{\varepsilon \log(d_B r / \varepsilon)} \right)$ channel 
queries when $d_A= rd_B$, and $
\Omega\!\left( \frac{d_A d_B r}{\varepsilon^2 \log(d_B r / \varepsilon)} \right)$ channel queries when  $d_A\le rd_B/2$. These lower bounds improve upon the best previous $\Omega(d_A d_B r)$ bound by introducing explicit, near-optimal $\varepsilon$-dependence. Moreover,  when $d_A\le rd_B/2$,   the lower bound is optimal up to a logarithmic factor. The proof constructs  ensembles of channels that are well separated in diamond norm yet admit Stinespring isometries that are close in operator norm.
\end{abstract}

\section{Introduction}\label{sec:intro}
In~\cite{AMele2025,chen2025quantumchanneltomographyestimation} it is proved that there exists a quantum learning algorithm that uses
    \bb
        N = O \left(\frac{d_Ad_Br}{\epsilon^2}\right)
    \ee
    parallel queries of any unknown channel $\cN$ with input dimension $d_A$, output dimension $d_B$, and Choi rank $r$ and, with probability at least $2/3$,  outputs a classical description of a channel $\hat{\cN}$ which is distant at most $\epsilon$ from $\cN$ in diamond distance. Moreover, in~\cite{Girardi2025Dec} it is proved that any quantum algorithm learning $\cN$ up to constant error with success probability at least $2/3$ needs
    \begin{equation}
        N= \Omega(d_Ad_Br)
    \end{equation}
    queries of $\cN$ at least. The aim of our work is to improve the lower bound in order to make it dependent on the diamond distance $\epsilon$. More precisely, our main result (see Theorem \ref{thm:main}) identifies the new lower bounds
    
    \bb\label{eq:new}
            N = \begin{cases}
                \Omega\left(\frac{d_Ad_Br}{\eps \log(d_Br/\eps)} \right) &\text{ when } \quad d_A = rd_B, \\ 
                 \Omega\left(\frac{d_Ad_Br}{\eps^2 \log(d_Br/\eps)} \right) & \text{ when }\quad d_A\le rd_B/2.
            \end{cases}
        \ee
    This result, combined   with the upper bound of \cite{chen2025quantumchanneltomographyestimation},  shows that  the optimal dependency in the precision parameter $\eps$ is $\widetilde{\Theta}(\frac{1}{\eps})$ when $d_A = d_Br$. 
     In particular, for learning unitary channels, our result recovers the optimal  bound $\Theta(\frac{d^2}{\eps})$ of \cite{haah2023query} up to a logarithmic factor, via a different proof strategy that applies specifically in the coherent setting. Moreover, our approach generalises to non-unitary channels. 
    Furthermore, in the regime  $d_A\le rd_B/2$, our lower bound is  optimal, up to a logarithmic factor, given the upper bounds of \cite{AMele2025,chen2025quantumchanneltomographyestimation}.

    The main idea to prove our lower bounds consists of two steps. The first one is the proof of a general lower bound (Theorem \ref{thm:gen-LB}), which leverage any arbitrary ensemble of channels $\{\Phi_i\}$ that are pairwise far in diamond distance, yet whose Stinespring isometries are pairwise close in operator norm. The resulting lower bound then scales logarithmically with the size of the ensemble and inversely with the distance of the Stinespring isometries. 
    The second step is the actual construction of  suitable ensembles of channels to prove our lower bounds. 
    A natural strategy to this end could be the use of existing packing nets. However, as we show in Appendix \ref{sec:LB-sqrt(eps)}, this approach -- although simpler -- yields a weaker bound:
    \begin{align}
        N \ge  \Omega\left(\frac{d_Ad_Br}{\sqrt{\eps}\log(d_Br/\sqrt{\eps})}\right). 
    \end{align}
    Instead, using a probabilistic approach, we construct two  families of random Stinespring isometries which are sufficiently close in operator norm, but with positive probability engender channels which are far enough in diamond distance. Such argument ensures the existence of two ensembles that produce the desired lower bounds (Theorem \ref{thm:main}).\\
    
    The remainder of the manuscript is organised as follows. In Section~\ref{sec:intro} we introduce the notation and the definitions that we are going to use in the paper. In Section~\ref{sec:LB} we state and prove Theorem \ref{thm:gen-LB}, i.e. the general lower bound on channel learning constructed in terms of ensembles of quantum channels. In Section~\ref{sec:improved} we prove the lower bound \eqref{eq:new} (see Theorem \ref{thm:main}) leveraging two families of particular isometries in order to construct  suitable ensembles of channels to be used in Theorem~\ref{thm:gen-LB}.  In the Appendix we provide the proofs that were deferred in the previous sections to improve readability.
%\subsection{Preliminaries}

\subsection{ Notation}
All the Hilbert spaces that we are going to consider are supposed to be finite-dimensional. Let $\cH_A \cong \C^{d_A}$ and $\cH_B \cong \C^{d_B}$ denote input and output spaces. We write $\cL(\cH)$ for linear operators on $\cH$, and $\cD(\cH)$ for quantum states (positive semi-definite  operators with unit trace). The operator norm is $\|X\|_\mathrm{op} = \sup_{\|\psi\|=1} \| X\ket{\psi} \|$, and the trace norm is $\|X\|_1 = \Tr\sqrt{X^\dagger X}$. For $\rho \in \cD(\cH)$, the von Neumann entropy is $S(\rho) = -\Tr[\rho \log \rho]$ and the conditional entropy of a bipartite state $\rho_{AB}$ is $S(A|B)_{\rho} = S(AB)_{\rho} - S(B)_{\rho}$. 
All logarithms are in base $\mathrm{e}$.

\subsection{Quantum channels and their representations}

A \emph{quantum channel} $\Phi: \cL(\cH_A) \to \cL(\cH_B)$ is a completely positive trace-preserving (CPTP) map. It can be written in the Kraus representation as 
$\Phi(X) = \sum_{i=1}^r K_i X K_i^\dagger$ with $\sum_i K_i^\dagger K_i = \id_A$. The minimal $r$ is called  the \emph{Kraus rank}. 

For any channel $\Phi$, there exists an \emph{isometry} $V: \cH_A \to \cH_B \otimes \cH_E$ ($V^\dagger V = \id_A$) such that $\Phi(X) = \Tr_E(V X V^\dagger)$. Such  isometry $V$ is called \emph{Stinespring dilation} of $\Phi$. The minimal  dimension of $ \cH_E$ equals the Kraus rank. 

The \emph{Choi state} of the channel $\Phi$ is 
\bb
J_\Phi \coloneqq ({\rm Id}_{A'} \otimes \Phi  )(\ket{\Psi}\bra{\Psi}) \in \cL(\cH_{A'} \otimes \cH_B),
\ee
where $\ket{\Psi}_{A'A} \coloneqq \frac{1}{\sqrt{d_A}}\sum_{i=1}^{d_A} \ket{i}_{A'} \otimes \ket{i}_{A}$ is the normalised maximally entangled state between $A'$ and $A$. The linear map $\Phi$ is CPTP if and only if $J_\Phi \geq 0$ and $\Tr_B J_\Phi = \id_{A'}/d_A$.

The \emph{Choi rank} $\operatorname{rank}_{\text{Choi}}{\Phi} \coloneqq \operatorname{rank}(J_\Phi)$ equals the  Kraus rank and the minimal environment dimension of Stinespring dilations.

\subsection{Channel ensembles with distance constraints}

Let $\mathcal{C}(d_A,d_B,r)$ denote the set of quantum channels 
$\Phi:\cL(\cH_A)\to\cL(\cH_B)$ with Choi rank at most $r$, i.e.
\bb
    \mathcal{C}(d_A,d_B,r) \coloneqq \{\Phi \text{ quantum channel} \mid 
    \operatorname{rank}_{\text{Choi}}{\Phi} \leq r\}.
\ee

For a channel $\Phi \in \mathcal{C}(d_A,d_B,r)$, let $V_\Phi:\cH_A \to \cH_B \otimes \cH_E$ be a Stinespring isometry with minimal environment dimension $d_E \le  r$.

For channels $\Phi,\Psi: \cL(\cH_A) \to \cL(\cH_B)$, the diamond distance is defined as 
\bb
\|\Phi - \Psi\|_\diamond \coloneqq \sup_{\rho_{RA} \in \cD(\cH_R \otimes \cH_A)} \|{({\rm Id}_R\otimes \Phi )(\rho_{RA}) - ({\rm Id}_R\otimes\Psi )(\rho_{RA})}\|_1,
\ee
where the supremum is over all auxiliary spaces $\cH_R$ and states $\rho_{RA}$.

We say that two channels $\Phi,\Psi \in \mathcal{C}(d_A,d_B,r)$ are 
$\varepsilon$-\emph{diamond far} if
\begin{equation}
    \|\Phi - \Psi\|_{\diamond} > \varepsilon.
\end{equation}
We say that their Stinespring isometries are $\eta$-\emph{operator norm close} if there exist choices of isometries $V_\Phi, V_\Psi$ such that
\begin{equation}
    \|V_\Phi - V_\Psi\|_{\mathrm{op}} \leq \eta.
\end{equation}

Finally, we define $\mathcal{E}(d_A,d_B,r,\varepsilon,\eta)$ as the set of ensembles of channels that are pairwise $2\varepsilon$-diamond-far but have $\eta$-close Stinespring isometries:
\begin{equation}
    \mathcal{E}(d_A,d_B,r,\varepsilon,\eta) = 
    \left\{ \{\Phi_i\}_{i=1}^M \subset \mathcal{C}(d_A,d_B,r) 
    \;\middle|\; 
    \begin{array}{l}
        \forall i \neq j: \|\Phi_i - \Phi_j\|_{\diamond} > 2\varepsilon, \\
        \exists \text{ Stinespring isometries } \{V_i\}_{i=1}^M \\
        \text{such that } \forall i \neq j: \|V_i - V_j\|_{\mathrm{op}} \leq \eta
    \end{array}
    \right\}.
\end{equation}

\subsection{The coherent query model}

In the \emph{coherent query model} for quantum channel learning, the learning algorithm is allowed to interleave queries to the unknown channel with arbitrary, adaptively chosen quantum operations. Formally, the  algorithm prepares an initial quantum state $\rho$ on a system comprising the input space $\cH_A$ of dimension $d_A$ together with an auxiliary system $\cH_{\text{aux}}$ of arbitrary dimension. It  then performs $N$ uses of the unknown channel $\Phi$, interspersed with arbitrary quantum channels $\{\cN_i\}_{i=1}^{N-1}$ (the \emph{intermediate operations}) that act jointly on the output space $\cH_B$ and the auxiliary system. The final state after $N$ queries is
\bb
\rho^{\text{output}} = \bigl[\Phi \otimes {\rm Id}_{\text{aux}}\bigr] \circ \cN_{N-1} \circ \bigl[\Phi \otimes {\rm Id}_{\text{aux}}\bigr] \circ \cdots \circ \cN_1 \circ \bigl[\Phi \otimes {\rm Id}_{\text{aux}}\bigr](\rho)\,,
\ee
where each $\Phi \otimes \id_{\text{aux}}$ denotes a query to the unknown channel acting on the input system while leaving the auxiliary system unchanged. Finally, the  algorithm measures $\rho^{\text{output}}$ with a positive operator-valued measure (POVM) to produce a classical description of an estimate $\hat{\Phi}$. The \emph{query complexity} is the minimum  number $N$ of uses of $\Phi$ required to output, with high probability, an estimate $\hat{\Phi}$ such that $\|\Phi - \hat{\Phi}\|_\diamond \le \varepsilon$.

This model  generalizes the parallel (or non-adaptive) query model, in which all $N$ uses of $\Phi$ are applied in parallel on a (possibly entangled) input state, corresponding to the special case where the intermediate operations $\cN_i$ are all identity channels. The coherent model captures the most general physically realizable learning procedure that respects causality and does not assume access to the inverse or conjugate of $\Phi$. It is the natural setting for studying the fundamental quantum limits of channel learning when arbitrary quantum processing between queries is allowed.

\section{A general lower bound on channel learning}\label{sec:LB}

In this section, we prove  a general  lower bound for learning a general quantum channel in diamond distance.

\begin{thm}[(General lower bound)]\label{thm:gen-LB}
    Let $ d_A, d_B, r\ge 1 $, $M\ge 3$ and $\eps, \eta\in (0, 1/2)$. Consider an ensemble $\{\Phi_i\}_{i=1}^M \in  \mathcal{E}(d_A,d_B,r,\varepsilon,\eta)$ of $M$ quantum channels that are $2\varepsilon$-diamond-far and  whose 
Stinespring isometries are $\eta$-operator-norm-close. 
	    Any  coherent algorithm that constructs $\hat{\Phi}_i$ such that  $\|\Phi_i-\hat{\Phi}_i\|_\diamond \le \eps$  with probability at least $2/3$ for all $i\in [M]$ needs at least   
        \bb
        N =  \left\lceil\frac{(2/3)\log(M)-\log 2}{4\eta \log(d_Br/\eta)}\right\rceil
        \ee
        uses of $\cN$.
\end{thm}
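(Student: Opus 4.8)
The plan is to run a standard information-theoretic (Fano-type) argument, but with the twist that the ``data processing'' step is carried out at the level of Stinespring isometries rather than channels, so that the closeness of the $V_i$ in operator norm — rather than the closeness of the $\Phi_i$ in diamond norm — controls the mutual information. First I would set up the hypothesis-testing reduction: let $I$ be uniform on $[M]$, feed the algorithm the channel $\Phi_I$, and let $\hat I$ be the index of the channel closest to the output estimate $\hat\Phi$ (ties broken arbitrarily). Since the $\Phi_i$ are pairwise $2\varepsilon$-diamond-far and the algorithm returns $\hat\Phi$ with $\|\Phi_I-\hat\Phi\|_\diamond\le\varepsilon$ with probability $\ge 2/3$, the triangle inequality forces $\hat I=I$ on that event, so $\prob{\hat I\ne I}\le 1/3$. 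Fano's inequality then gives $I(I:\hat I)\ge (2/3)\log M-\log 2$, and by data processing $I(I:\hat I)\le I(I:\rho^{\text{output}}_I)$, where $\rho^{\text{output}}_I$ is the algorithm's final state.

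The core of the proof is to upper bound $I(I:\rho^{\text{output}}_I)$ by $N$ times a per-query term of size $O\big(\eta\log(d_Br/\eta)\big)$. For this I would purify everything: replace each query $\Phi_i\otimes\Id_{\text{aux}}$ by its Stinespring dilation $V_i$ acting into $\cH_B\otimes\cH_E$ with $d_E\le r$, and keep the environment registers around, so that the whole protocol on input $i$ becomes a pure state $\ket{\psi^{(N)}_i}$ on (system $\otimes$ aux $\otimes$ $N$ copies of $E$). Because the intermediate operations $\cN_j$ are fixed (independent of $i$), the only $i$-dependence enters through the $N$ applications of $V_i$. I would then track the state after each query and show, via a hybrid/telescoping argument together with $\|V_i-V_j\|_{\mathrm{op}}\le\eta$, that the states $\ket{\psi^{(k)}_i}$ and $\ket{\psi^{(k)}_j}$ differ in a suitable sense by $O(k\eta)$ after $k$ queries; equivalently, the reduced states on the ``visible'' system are close. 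The mutual information $I(I:\rho^{\text{output}}_I)$ is bounded by the Holevo quantity $\chi = S\!\big(\tfrac1M\sum_i \sigma_i\big) - \tfrac1M\sum_i S(\sigma_i)$ of the ensemble of final visible states $\sigma_i$; since all $\sigma_i$ live within operator-norm distance $O(N\eta)$ of one another and on a space effectively of dimension $d_B^{?}\cdot(\dim\text{aux})$… — this is where I need to be careful, because the aux dimension is unbounded. The resolution is that the new information per query is what matters: I would bound the \emph{increase} in mutual information at query $k$ by a continuity estimate (Alicki–Fannes / Audenaert for conditional entropy, or a direct Fuchs–van de Graaf + Fannes argument) applied to the pre- and post-query states, which differ only through $V_i$ vs.\ a fixed reference isometry, so the relevant dimension is that of the freshly added environment register $\cH_E$, of dimension $\le r$, tensored with $\cH_B$ — giving the $\log(d_B r/\eta)$ factor, with the $\eta$ prefactor coming from $\|V_i V_i^\dagger - V_j V_j^\dagger\|$-type bounds. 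Summing over the $N$ queries yields $I(I:\rho^{\text{output}}_I)\le 4N\eta\log(d_Br/\eta)$.

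Combining the two bounds gives $(2/3)\log M-\log 2\le 4N\eta\log(d_Br/\eta)$, which rearranges to the claimed $N\ge\lceil((2/3)\log M-\log 2)/(4\eta\log(d_Br/\eta))\rceil$.

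**Main obstacle.** The delicate point is the per-query continuity estimate: one must argue that a single application of $V_i$ (versus a reference isometry, or versus $V_j$) perturbs the relevant reduced state by an amount controlled by $\eta$ on a space of effective dimension $O(d_B r)$, \emph{uniformly in the unbounded auxiliary dimension}. This requires choosing the right entropic quantity — most likely phrasing the per-step bound in terms of conditional entropy $S(\text{aux}B\,|\,\text{environment so far})$ and invoking a dimension-robust continuity bound (Alicki–Fannes–Winter), where the conditioning system's dimension does not enter — together with a clean telescoping that isolates the $i$-dependence query-by-query rather than all at once. Getting the constant ``$4$'' and the argument of the logarithm exactly right is the bookkeeping-heavy part; the conceptual content is that operator-norm closeness of isometries, unlike diamond closeness of channels, survives the purification and accumulates only linearly in $N$.
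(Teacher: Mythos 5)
Your proposal follows essentially the same route as the paper's proof: a Fano reduction, purification of each query via the Stinespring isometries, a query-by-query telescoping of the Holevo quantity in which only the $k$-th isometry ($V_i$ versus a fixed reference $V_1$) changes, and a dimension-robust Alicki--Fannes--Winter continuity bound for conditional entropy applied to the freshly created $B_kE_k$ register of dimension $d_Br$ (the unbounded auxiliary system only appears in the conditioning), with the per-step trace distance bounded by $2\eta$ from the operator-norm closeness of the isometries. The only cosmetic difference is that the paper conditions the fresh $d_Br$-dimensional output register on everything else (using that the two hybrid states agree after tracing it out), rather than your slightly inverted phrasing, but the bound $4N\eta\log(d_Br/\eta)$ and the conclusion are identical.
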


We follow a standard strategy for proving lower bounds for learning problems (e.g., ~\cite{flammia2012quantum,haah2017sample,lowe2022lower,fawzi2023lower,oufkir2023sample,Bluhm2024Mar,Rosenthal2024Sep, Mele_2025}). 
\begin{proof}

\begin{figure}[t]
  \centering
  \def\svgwidth{0.62\linewidth}
  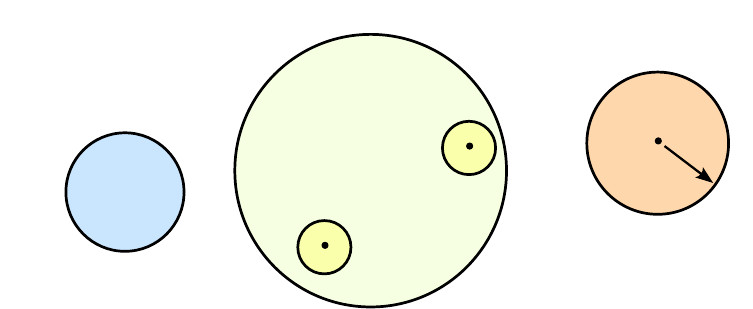
  \caption{Schematic representation of the ensemble $\{\Phi_i\}_{i=1}^M$.} 
  \label{fig:figure2}
\end{figure}

	Let us consider any fixed coherent algorithm that constructs $\hat{\Phi}_i$ such that  $\|\Phi_i-\hat{\Phi}_i\|_\diamond \le \eps$  with probability at least $2/3$ for all $i\in [M]$. Let $X\sim \unif[M]$ and let $Y$ be the index output by such algorithm upon receiving the quantum channel $\Phi_X^{\otimes N}$. Since the quantum channels $\{\Phi_i\}_{i=1}^M$  are pairwise $2\varepsilon$-diamond-far, the algorithm should find $X$ just by picking the $\eps$-closest channel to $\hat{\Phi}_X$ in $\{\Phi_i\}_{i=1}^M$ (see Figure \ref{fig:figure2}). Hence, by Fano's inequality \cite{FANO} we have:
	\bb
		I(X: Y)\ge (2/3)\log(M)-\log 2.
	\ee
 A coherent algorithm using the quantum channel  $\Phi_x(\,\cdot\,)$ chooses the input state $\rho$,  the channels $\cN_{1}, \dots, \cN_{N-1}$, and measures the output state:
\bb
\sigma_{x}^{N} = [\Phi_{x}\otimes {\rm Id}] \circ \cN_{N-1} \circ \cdots \circ \cN_{1}\circ [\Phi_{x}\otimes {\rm Id}](\rho). 
\ee
We can suppose that the channels $\{\Phi_{x}\}_{k\in [N]}$ act on different systems $\{A_k\}_{k\in [N]}$ of dimension $d_A$ (we can include swap channels in $\cN_1, \dots , \cN_{N-1}$ if necessary).
We can assume, without loss of generality, that all the channels $\cN_{1}, \dots, \cN_{N-1}$ are isometries up to modifying the measurement at the end. Similarly, we can suppose that $V_x^{A_k\to B_k} = V_{\Phi_{x}}^{A_k\to B_k}$ is applied instead of $\Phi_{x}$ directly after $\cN_{k-1}$ for $k=1, \dots, N$.  The global system is thus $B_1 \cdots B_N E$ where $|B_k|=d_Br$ and $E$ is an ancilla system of arbitrary dimension. The global state before measurement becomes 
\bb
\sigma_{x}^{N} = [\pazocal{V}_{x}\otimes {\rm Id}] \circ \cN_{N-1} \circ \cdots \circ \cN_{1}\circ [\pazocal{V}_{x}\otimes {\rm Id}](\rho),
\ee
with $\pazocal{V}_x(\,\cdot\,) = V_x(\,\cdot\,)V_x^\dagger$. 
 For $k\in [N]$, we denote $\sigma_{x}^{k} = \pazocal{V}_x \circ \cN_{k-1} \circ \cdots \circ \cN_{1}\circ \pazocal{V}_x(\rho)$, $\sigma_{x}^0 = \rho$ and $\cN_{0}={\rm Id}$, so that we have 
\bb
\sigma_{x}^{k} = \pazocal{V}_x \circ \cN_{k-1}(\sigma_{x}^{k-1}). 
\ee
Denote by $\pi_{k} = \frac{1}{M}\sum_{x=1}^M \pazocal{V}_x \circ \cN_{k-1}(\sigma_{x}^{k-1})$ and $\xi_{k} = \frac{1}{M}\sum_{x=1}^M \pazocal{V}_1\circ \cN_{k-1}(\sigma_{x}^{k-1})$.
Hence the mutual information between $X$ and the observation of the coherent algorithm $Y$ can be bounded as follows
\bb\label{eq:UBmutual-gen}
    I(X:Y)&\leqt{(i)}  S\left(\frac{1}{M}\sum_{x=1}^M \sigma_{x}^N\right)-\frac{1}{M}\sum_{x=1}^MS\left( \sigma_{x}^N\right)
    \\&\eqt{(ii)} \sum_{k=1}^{N} S\left( \frac{1}{M}\sum_{x=1}^M\sigma_{x}^{k}\right)-\sum_{k=1}^{N}S\left( \frac{1}{M}\sum_{x=1}^M\sigma_{x}^{k-1}\right)
    \\&\eqt{(iii)}\sum_{k=1}^{N} S\left(\frac{1}{M}\sum_{x=1}^M \pazocal{V}_x \circ \cN_{k-1}(\sigma_{x}^{k-1})\right)-\sum_{k=1}^{N}S\left( \frac{1}{M}\sum_{x=1}^M \pazocal{V}_1\circ \cN_{k-1}(\sigma_{x}^{k-1})\right)
    \\&=\sum_{k=1}^{N} S\left(B_{1} \cdots B_{k-1}B_{k} A_{k+1}\cdots A_{N} E\right)_{\pi_{k}} -\sum_{k=1}^{N}S\left(B_{1} \cdots B_{k-1}B_{k} A_{k+1}\cdots A_{N} E\right)_{\xi_{k}},
\ee
where (i) uses Holevo's theorem \cite{holevo1973bounds}
; (ii) is a telescopic sum and uses the fact that $S\left( \sigma_{x}^N\right) = S\left( \rho\right)$ for all $x$ as all the applied operations are isometry; (iii) uses the assumption that $\cN_{k-1}$  and $\pazocal{V}_1$ are  isometry channels. 

Now, observe that we have that  $\ptr{B_{k}}{\pi_{k}} = \ptr{B_{k}}{\xi_{k}}$ so we can apply the continuity bound of \cite[Theorem 5]{Berta2024Aug} (see also \cite{Alicki_2004, Winter_2016, Audenaert_2025})
\bb
    &S\left(B_{1} \cdots B_{k-1}B_{k} A_{k+1}\cdots A_{N} E\right)_{\pi_{k}} - S\left(B_{1} \cdots B_{k-1}B_{k} A_{k+1}\cdots A_{N} E\right)_{\xi_{k}}
    \\&\quad = S\left(B_k|B_{1} \cdots B_{k-1} A_{k+1}\cdots A_{N} E\right)_{\pi_{k}} - S\left(B_{k}|B_{1} \cdots B_{k-1} A_{k+1}\cdots A_{N} E\right)_{\xi_{k}}
    \\&\quad \le  \|\pi_{k} - \xi_{k}\|_1\log(|B_{k}|^2) + h_2( \|\pi_{k} - \xi_{k}\|_1)
\ee
with $h_2(a) = -a\log a -(1-a)\log(1-a)$ being the binary entropy. We have that
\bb
    \|\pi_{k} - \xi_{k}\|_1  &= \left\| \frac{1}{M}\sum_{x=1}^M \pazocal{V}_x \circ \cN_{k-1}(\sigma_{x}^{k-1}) -  \frac{1}{M}\sum_{x=1}^M \pazocal{V}_1 \circ \cN_{k-1}(\sigma_{x}^{k-1}) \right\|_1
    \\&\le \frac{1}{M}\sum_{x=1}^M\left\|  (\pazocal{V}_x  - \pazocal{V}_1) \circ \cN_{k-1}(\sigma_{x}^{k-1}) \right\|_1
    \\&=  \frac{1}{M}\sum_{x=1}^M\left\| V_x \zeta V_x^\dagger - V_1 \zeta V_1^\dagger \right\|_1
\ee
 with $\zeta = \cN_{k-1}(\sigma_{x}^{k-1}) $ being a quantum state. Using the triangle inequality, we obtain 
\bb
     \frac{1}{M}\sum_{x=1}^M\left\| V_x \zeta V_x^\dagger - V_1 \zeta V_1^\dagger \right\|_1 &\le  \frac{1}{M}\sum_{x=1}^M\left(\left\| (V_x-V_1) \zeta V_x^\dagger\right\|_1 + \left\| V_1 \zeta (V_x-V_1)^\dagger\right\|_1\right)
     \\&\le  \frac{1}{M}\sum_{x=1}^M\left( \|V_x-V_1\|_{\mathrm{op}}\left\|  \zeta V_x^\dagger\right\|_1+ \|V_x-V_1\|_{\mathrm{op}}\left\| V_1\zeta\right\|_1\right)
     \\&\le 2\eta.
\ee
Therefore, we deduce 
\bb
    I(X:Y)&\le \sum_{k=1}^{N} S\left(B_{1} \cdots B_{k-1}B_{k} A_{k+1}\cdots A_{N} E\right)_{\pi_{k}} -\sum_{k=1}^{N}S\left(B_{1} \cdots B_{k-1}B_{k} A_{k+1}\cdots A_{N} E\right)_{\xi_{k}}
    \\&\le \sum_{k=1}^{N}\|\pi_{k} - \xi_{k}\|_1\log(|B_{k}|^2) + h_2( \|\pi_{k} - \xi_{k}\|_1)
      \\&\le \sum_{k=1}^{N} 2\eta \log((d_Br)^2) + h_2( 2\eta)
    \\&\le 4N\eta \log(d_Br/\eta),
\ee
where we used that $h_2(a) \le 2a \log(1/a)$ for $a\in (0, \frac{1}{2})$.
Since $I(X:Y)\ge (2/3)\log(M)-\log2$ we deduce that 
\bb
    N \ge \frac{(2/3)\log(M)-\log 2}{4\eta \log(d_Br/\eta)}. 
\ee
This concludes the proof.
\end{proof}

Given Theorem \ref{thm:gen-LB}, we can prove lower bounds on learning quantum channels by constructing an ensemble $\{\Phi_i\}_{i=1}^M$ within $\mathcal{E}(d_A,d_B,r,\varepsilon,\eta)$ containing $M$ quantum channels that are pairwise $2\varepsilon$-diamond-far, yet whose Stinespring isometries are pairwise $\eta$-operator-norm-close. The resulting lower bound then scales with $M$ and inversely with $\eta$. To strengthen this bound, we should aim to construct an ensemble that maximizes $M$ while minimising $\eta$\footnote{Note that by the inequality  $\|\Phi_x - \Phi_y\|_\diamond \le \|V_{x} - V_y\|_{\mathrm{op}}$ \cite{kretschmann2008information}, the parameter $\eta$ should be at least $2\eps$.}.

A natural approach to constructing such an ensemble is to use existing packing nets. However, this leads to an ensemble in $\mathcal{E}(d_A,d_B,r,\varepsilon,4\sqrt{\eps})$ of cardinality $M$ satisfying, $\log M = \Omega(d_Ad_B r)$ and by Theorem \ref{thm:gen-LB} implies  the following weak lower bound  (see Appendix \ref{sec:LB-sqrt(eps)}  for details):
\begin{align}\label{eq:sqrt}
    N \ge  \Omega\left(\frac{d_Ad_Br}{\sqrt{\eps}\log(d_Br/\sqrt{\eps})}\right). 
\end{align}
In what follows, we improve the $\varepsilon$-dependence of this lower bound by constructing a new  ensemble in $\mathcal{E}(d_A,d_B,r,\varepsilon,\eta)$ with comparable cardinality but with $\eta = O(\varepsilon)$ rather than $O(\sqrt{\varepsilon})$.

\newpage

\section{Improved lower bounds}\label{sec:improved}

\begin{figure}[t]
  \centering
  \def\svgwidth{\linewidth}
  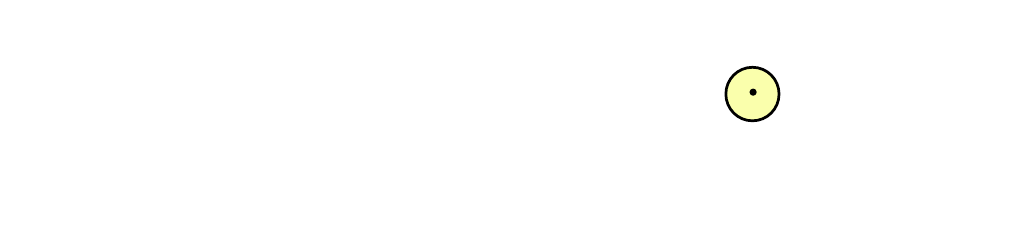
  \caption{Schematic construction of the isometries $V_x$ and of the channels $\Phi_x$, where $U_x\sim \Haar({\rm U}(rd_B)) $, $\widetilde{V}_x=U_xS$, $V_x =\sqrt{1-\eps^2} \ket{0}\otimes \widetilde{V}_0 + \eps \ket{1}\otimes \widetilde{V}_x$, $\Phi_x(\,\cdot\,)=\Tr_E[\widetilde{V}_x\,\cdot\, \widetilde{V}_x^\dagger]$.} 
  \label{fig:figure}
\end{figure}
In this section, we improve the bound \eqref{eq:sqrt} by constructing an ensemble  $\mathcal{E}(d_A,d_B,r,\Omega(\eps) , 2\eps)$ with cardinality $M$ satisfying $\log M = \Omega(d_Ad_B r)$. More precisely, we construct a set of isometries $\{{V}_x\}_{x\in [M]}$ corresponding to quantum channels $\{\Phi_x\}_{x\in [M]}$ such that  $\|\Phi_x - \Phi_y\|_\diamond \ge \Omega(\eps)$,  $  \| {V}_x -  {V}_y\|_{\rm{op}} \le O(\eps)$, and $\log M = \Omega(d_Ad_B r)$, as in Figure \ref{fig:figure}. We prove the existence of  such a set  using a probabilistic argument and distinguish between the case $d_A=rd_B$ and $d_A\le rd_B/2$. \\

\textbf{Case $d_A =rd_B$.} Let  $\theta\in(\pi/2,\pi]$ such that $\epsilon =-2\cos\theta$. Let us define the $d_A\times d_A$ matrix
\bb
    O\coloneqq\begin{cases}
        \eps\cdot {\rm diag}( \mathrm{e}^{i\theta}, \mathrm{e}^{-i\theta},\dots,  \mathrm{e}^{i\theta}, \mathrm{e}^{-i\theta}) & \text{if $d_A$ is even,}\\
        \eps\cdot {\rm diag}( \mathrm{e}^{i\theta}, \mathrm{e}^{-i\theta},\dots,  \mathrm{e}^{i\theta}, \mathrm{e}^{-i\theta},0) & \text{if $d_A$ is odd.}
    \end{cases}
\ee
For $x \in[M]$,  let $U_x\sim \Haar({\rm U}(d_Br))$ and we  define the Stinespring isometry
\bb\label{construction-eq}
V_x \coloneqq U_x(\id + O)U_x^\dagger,
%= U\left(\sum_{i=0}^{d_Br-1}(1+\eta_i) \proj{i}\right)U^\dagger=\id +  \sum_{i=0}^{d_Br-1}\eta_i U\proj{i}U^\dagger
\ee 
and let ${\Phi}_x(\,\cdot\,) \coloneqq \ptr{E}{{V}_x\,\cdot\,{V}_x^\dagger}$ the corresponding quantum channel. It has Kraus rank at most $|E| = r$.  The quantum channel ${\Phi}_x$ has input system $A$ of dimension $d_A$ and output system $B$ of dimension $d_B$.\\

\textbf{Case $d_A\le rd_B/2$.}
Let $\widetilde{\Phi}_0$ be a quantum channel  with Kraus operators $\{K_{0,i}\}_{i\in [r]}$ satisfying
\begin{align}\label{eq:Kraus-1}
    \left|\Tr\left[ K_{0,i}^\dagger  K_{0,j}\right]\right| \le \frac{2d_A}{r}\delta_{i,j}\,, \quad \forall i, j\in [r]. 
\end{align}
The existence of such a channel is shown in Appendix \ref{sec:existence}. Let  $\widetilde{V}_0^{A\to BE} = \sum_{i=1}^r \ket{i}_E \otimes  K_{0,i}$ be a Stinespring isometry of the quantum channel $\widetilde{\Phi}_0$. For $x \in[M]$,  let $U_x\sim \Haar({\rm U}(d_Br))$ and define
\bb\label{eq:V}
    S^{A\to BE}&\coloneqq \sum_{i=1}^{d_A} \ket{i}_{BE}\bra{i}_A \in \mathbb{C}^{rd_B\times d_A} \qquad \text{and}\qquad  \widetilde{V}_x^{A\to BE} \coloneqq U_x^{BE}S^{A\to BE}. 
\ee
Then we define the isometries 
\bb\label{construction-leq}
    V^{A\to FBE}_x \coloneqq \sqrt{1-\eps^2} \ket{0}_F\otimes \widetilde{V}_0^{A\to BE} + \eps \ket{1}_F\otimes \widetilde{V}_x^{A\to BE}
\ee
and we call ${\Phi}_x(\,\cdot\,) \coloneqq \ptr{E}{{V}_x\,\cdot\,{V}_x^\dagger}$ the corresponding quantum channel, which has Kraus rank at most $|E| = r$.  The quantum channel ${\Phi}_x$ has input system $A$ of dimension $d_A$ and output system $FB$ of dimension $2d_B$.\\

For both constructions, we have the following technical lemma. 
\begin{lemma}\label{lem:ppts-Phi}
 Let us suppose that either $d_A = rd_B$ or $d_A\le rd_B/2$ and, given any unitary $U_i\in{\rm U}(rd_B)$, let $\Phi_i$ be the channel constructed from according to \eqref{construction-eq} or according to \eqref{eq:V} and \eqref{construction-leq}, respectively. Then, calling $J_{\Phi_i}$ the Choi state of the channel $\Phi_i$, the function 
 \bb
 f:(U_1, U_2)\in {\rm U}(rd_B)^2 \mapsto \|J_{\Phi_1} - J_{\Phi_2} \|_1
 \ee
is $L\coloneqq 4\sqrt{\frac{2}{d_A}}\eps$-Lipschitz with respect to the $\ell_2$-sum of the 2-norms, namely
\bb
    |f(U_1,U_2)-f(U'_1,U'_2)|\leq  L  \|(U_1,U_2) -(U_1',U_2')\|_2
\ee
for all $U_1,U_1',U_2,U_2'\in {\rm U}(rd_B)$, where $\|(A,B)\|_2\coloneqq\sqrt{\|A\|_2^2+\|B\|_2^2}$. % is the $\ell_2$-sum of the 2-norms.
Furthermore, if we consider independent random unitaries $U_1,U_2 \sim \Haar({\rm U}(rd_B))$, we have
\bb\label{eq:LB-1st moment-uni}
\ex{\|J_{\Phi_1} - J_{\Phi_2}\|_1}\ge  \Omega(\eps). 
\ee
\end{lemma}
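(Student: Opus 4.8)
The plan is to establish the two claims separately, handling the Lipschitz bound first and then the first-moment lower bound, and within each one treating the two cases ($d_A=rd_B$ and $d_A\le rd_B/2$) in parallel since they share the same structure.

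\textbf{Lipschitz continuity.} First I would reduce the problem to controlling $\|J_{\Phi_i}-J_{\Phi'_i}\|_1$ as a function of a single unitary $U_i$, using the triangle inequality: $|f(U_1,U_2)-f(U_1',U_2')|\le \|J_{\Phi_1}-J_{\Phi_1'}\|_1 + \|J_{\Phi_2}-J_{\Phi_2'}\|_1$, so it suffices to show each term is $\tfrac{L}{\sqrt2}\|U_i-U_i'\|_2$-bounded and then combine via Cauchy--Schwarz on $(\|U_1-U_1'\|_2,\|U_2-U_2'\|_2)$. Next, $\|J_{\Phi_i}-J_{\Phi_i'}\|_1$ is the trace norm of a partial trace, so it is at most the trace norm before tracing out $E$; and the Choi state is a rank-one-per-input quantity, namely $J_{\Phi_i}=(\Id\otimes\ptr{E}{\cdot})\!\left(\frac{1}{d_A}(\id_{A'}\otimes V_i)\ketbra{\Gamma}(\id_{A'}\otimes V_i^\dagger)\right)$ for the unnormalised maximally entangled vector $\ket{\Gamma}$. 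I would use the identity $\||a\rangle\langle a|-|b\rangle\langle b|\|_1\le \|a\|(\|a\|+\|b\|)\cdot$ (angle-type bound) --- more precisely $\| |a\rangle\langle a| - |b\rangle\langle b| \|_1 \le \| a-b\| \cdot (\|a\|+\|b\|)$ --- applied to $|a\rangle = (\id_{A'}\otimes V_i)\ket{\Gamma}/\sqrt{d_A}$ and the primed version. Both have norm $1$, so this gives $\|J_{\Phi_i}-J_{\Phi_i'}\|_1 \le \frac{2}{\sqrt{d_A}}\|(V_i-V_i')\ket{\Gamma}\| = \frac{2}{\sqrt{d_A}}\|V_i-V_i'\|_2$ (the Frobenius norm of the difference of the isometries). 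It then remains to bound $\|V_i-V_i'\|_2$ in terms of $\|U_i-U_i'\|_2$: in the case $d_A=rd_B$ we have $V_x=U_x(\id+O)U_x^\dagger$, and writing the difference as $U_i O U_i^\dagger - U_i' O U_i'^\dagger = (U_i-U_i')OU_i^\dagger + U_i'O(U_i-U_i')^\dagger$ and using $\|O\|_{\rm op}=\eps$ gives $\|V_i-V_i'\|_2 \le 2\eps\|U_i-U_i'\|_2$; in the case $d_A\le rd_B/2$, $\widetilde V_x = U_x S$ with $\|S\|_{\rm op}=1$ and the $\eps$ prefactor on the $\ket{1}_F$ branch gives $\|V_i-V_i'\|_2 \le \eps\|U_i-U_i'\|_2$. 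Combining the factors ($2/\sqrt{d_A}$, $2\eps$, and $\sqrt2$ from Cauchy--Schwarz) yields $L=4\sqrt{2/d_A}\,\eps$ in the worse (even) case.

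\textbf{First-moment lower bound.} For $\ex{\|J_{\Phi_1}-J_{\Phi_2}\|_1}\ge\Omega(\eps)$, the plan is to lower-bound the trace distance by the distinguishing probability of a well-chosen two-outcome measurement, and then compute (or bound from below) the expectation of that quantity over the Haar-random unitaries. Since $\|J_{\Phi_1}-J_{\Phi_2}\|_1 \ge |\Tr[(J_{\Phi_1}-J_{\Phi_2})Q]|$ for any $0\le Q\le\id$, and since (by Jensen / concavity arguments one cannot directly use, so instead) one can lower bound the expectation of the trace distance by the trace distance of the expectations is the wrong direction --- so instead I would bound $\E\|J_{\Phi_1}-J_{\Phi_2}\|_1 \ge \E|\langle\psi|(J_{\Phi_1}-J_{\Phi_2})|\psi\rangle|$ for a fixed state $|\psi\rangle$, e.g. a maximally entangled state on $A'B$ (or $A'FB$), and then use a second-moment / Paley--Zygmund argument: compute $\E[\langle\psi|(J_{\Phi_1}-J_{\Phi_2})|\psi\rangle]$ (should be $0$ or $O(\eps)$ by symmetry) and its variance (should be $\Theta(\eps^2)$ from the $\eps$-scaling of the perturbation and the Weingarten calculus for $U\sim\Haar$), then conclude $\E|Z|\ge \mathrm{Var}(Z)^{1/2}\cdot c$ by anti-concentration, giving $\Omega(\eps)$. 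Concretely, in the $d_A=rd_B$ case the relevant quantity reduces to traces of products of $U_x O U_x^\dagger$ against a fixed projector, and the leading contribution comes from the $O(\eps)$ cross term whose mean-square over the Haar measure is a constant times $\eps^2$; in the $d_A\le rd_B/2$ case the orthogonality condition \eqref{eq:Kraus-1} on the Kraus operators of $\widetilde\Phi_0$ is exactly what is needed to show the $\eps$-branch contributes an $\Omega(\eps)$ term that does not cancel against the $\sqrt{1-\eps^2}$-branch.

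\textbf{Main obstacle.} The Lipschitz part is essentially bookkeeping with standard inequalities; the real work is the first-moment bound. The hard part will be extracting an \emph{honest $\Omega(\eps)$ lower bound} (not just an upper bound) on $\E\|J_{\Phi_1}-J_{\Phi_2}\|_1$: a naive application of Jensen collapses the difference to zero by unitary invariance, so one genuinely needs an anti-concentration estimate, which means computing a variance via Weingarten calculus on $\mathrm{U}(rd_B)$ and checking that the order-$\eps^2$ term has a strictly positive constant coefficient uniformly in the dimensions --- in the non-unitary case this is where the explicit near-orthogonality \eqref{eq:Kraus-1} of the reference Kraus operators, proved in Appendix~\ref{sec:existence}, does the heavy lifting. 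A secondary subtlety is making sure the chosen test state $|\psi\rangle$ (or test measurement) is fixed independently of $U_1,U_2$ so that the expectation and variance computations are clean.
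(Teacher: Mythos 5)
Your Lipschitz argument is fine and lands on the same constant as the paper: the paper bounds $\|\ptr{E}{V_1\rho V_1^\dagger}-\ptr{E}{V_1'\rho V_1'^\dagger}\|_1$ directly by splitting $U_1OU_1^\dagger-U_1'OU_1'^\dagger$ (resp.\ using $\|(U_1-U_1')S\|$), whereas you pass through the rank-one inequality $\||a\rangle\langle a|-|b\rangle\langle b|\|_1\le\|a-b\|(\|a\|+\|b\|)$ on the purified Choi vector before tracing out $E$; both are elementary and give $L=4\sqrt{2/d_A}\,\eps$, so this half is essentially equivalent bookkeeping.

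The first-moment bound, however, has a genuine gap. You propose to fix a test state $|\psi\rangle$ (independent of $U_1,U_2$), set $Z=\langle\psi|(J_{\Phi_1}-J_{\Phi_2})|\psi\rangle$, and argue $\E|Z|\gtrsim\mathrm{Var}(Z)^{1/2}=\Theta(\eps)$ by anti-concentration. But $|Z|\le\|J_{\Phi_1}-J_{\Phi_2}\|_{\mathrm{op}}$, and the whole point of these constructions is that the Choi difference has its trace-norm mass spread over $\Theta(r)$ eigendirections that move with $U_1,U_2$: e.g.\ in the case $d_A\le rd_B/2$ the relevant block $C=\ptr{E}{\widetilde V_0\Psi(\widetilde V_1-\widetilde V_2)^\dagger}$ satisfies $\E\Tr[|C|^2]=2/r$, so $\|C\|_{\mathrm{op}}\le\sqrt{2/r}$ surely, and hence no fixed (or even optimally chosen) rank-one test can see more than $O(\eps/\sqrt r)$ of the $\Omega(\eps)$ trace norm; likewise $\mathrm{Var}(Z)$ for a fixed $\psi$ is suppressed by inverse dimension factors and is not $\Theta(\eps^2)$ with a dimension-free constant. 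The same issue kills the fixed-$Q$ variant, since $\E\Tr[(J_{\Phi_1}-J_{\Phi_2})Q]=0$ by the $U_1\leftrightarrow U_2$ symmetry and no fixed $Q$ correlates with the random singular directions. What is actually needed (and what the paper does) is a lower bound on the trace norm itself: first isolate the first-order term — the off-diagonal flag block $2\eps\sqrt{1-\eps^2}\,\|C\|_1-2\eps^2$ when $d_A\le rd_B/2$, or the Hermitian part $D=A_1+A_1^\dagger-A_2-A_2^\dagger$ with $\|J_{\Phi_1}-J_{\Phi_2}\|_1\ge\|D\|_1-2\eps^2$ when $d_A=rd_B$ — and then apply the moment-comparison inequality $\bigl(\E\Tr|X|^2\bigr)^3\le\bigl(\E\Tr|X|\bigr)^2\,\E\Tr|X|^4$ (Hölder), with the second and fourth moments computed by Weingarten calculus (this is where the near-orthogonality \eqref{eq:Kraus-1} enters, to control $\E\Tr|C|^4\le 128/r^3$). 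Your sketch correctly identifies that Jensen fails, but the replacement you propose is quantitatively too weak by a factor $\sqrt r$ (or worse), so the $\Omega(\eps)$ claim does not follow from it.
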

 \begin{proof}
   The proof of this lemma is deferred to Section \ref{app:first-moment-and-lip}.
\end{proof}
Since the function $f$ is Lipschitz it concentrates around its mean by the following theorem. 
\begin{thm}[{\cite[Corollary~17]{meckes2013spectral}}]\label{lem:corollary17} Let $k,d\geq 1$. Suppose that $f: \big( {\rm U}(d)\big)^k\to \mathbb{R}$ is $L$-Lipschitz with respect to the $\ell_2$-sum of the 2-norms, i.e.
\bb
    \big|f(U_1,\dots, U_k)-f(U'_1,\dots, U'_k)\big|\leq L\sqrt{\sum_{i=1}^k\|U_i-U_i'\|_2^2}
\ee
for all $U_i,U_i'\in {\rm U}(d)$, with $i=1,\dots, k$. Then, if we independently sample $U_1,\dots, U_k$ according to the Haar measure on ${\rm U}(d)$, the following inequality holds for each $t>0$:
\bb
    \prob{f(U_1,\dots, U_k)  \geq \ex{f(U_1,\dots, U_k)} +t} \le  \exp\left(-\frac{dt^2}{12L^2}\right).
\ee
\end{thm}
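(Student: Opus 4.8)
The plan is to obtain this from the concentration-of-measure phenomenon on the unitary group, whose ultimate source is a Ricci curvature lower bound. First I would pass from the chordal to the intrinsic geometry. On ${\rm U}(d)$ the Hilbert--Schmidt distance satisfies $\|U-U'\|_2\le d_{\mathrm{geod}}(U,U')$ (a chord is never longer than the geodesic arc it subtends: if $U'=Ue^{A}$ with eigenangles $\theta_j\in[-\pi,\pi]$, then $\|U-U'\|_2=2\sqrt{\sum_j\sin^2(\theta_j/2)}\le\sqrt{\sum_j\theta_j^2}=d_{\mathrm{geod}}(U,U')$ for the bi-invariant metric induced by $\langle X,Y\rangle=\Tr(X^\dagger Y)$), and likewise $\sqrt{\sum_i\|U_i-U_i'\|_2^2}\le d_{\mathrm{geod}}$ on the Riemannian product $M:={\rm U}(d)^k$. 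Hence any $f$ that is $L$-Lipschitz for the $\ell_2$-sum of the $2$-norms is also $L$-Lipschitz for the product metric on $M$. Since the normalized Riemannian volume on a compact Lie group is the Haar measure (and on $M$ it is ${\rm Haar}^{\otimes k}$), it suffices to prove that for $1$-Lipschitz $f$ on $M$ one has $\prob{f\ge\ex{f}+t}\le\exp(-dt^2/12)$; the stated bound then follows by applying this to $f/L$.

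Next comes the curvature input. For the bi-invariant metric above, the Ricci tensor of ${\rm SU}(d)$ equals $-\tfrac14$ of the Killing form $B(X,Y)=2d\,\Tr(XY)$ of $\mathfrak{su}(d)$, so $\mathrm{Ric}=\tfrac d2\,g$. For a complete manifold with $\mathrm{Ric}\ge\kappa>0$, the Gromov--L\'evy isoperimetric comparison --- equivalently the Bakry--\'Emery criterion, which produces a logarithmic Sobolev inequality of constant $2/\kappa$ --- yields $\prob{f\ge\ex{f}+t}\le e^{-\kappa t^2/2}$ for $1$-Lipschitz $f$ under the normalized volume. Moreover, for a Riemannian product $N_1\times\cdots\times N_k$ one has $\mathrm{Ric}(v_1,\dots,v_k)=\sum_i\mathrm{Ric}_{N_i}(v_i)$, so a common lower bound $\mathrm{Ric}_{N_i}\ge\kappa$ is inherited by the product with the \emph{same} $\kappa$; this is exactly why the exponent does not deteriorate with $k$. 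Applied to ${\rm SU}(d)^k$ with $\kappa=d/2$ this already gives the cleaner bound $e^{-(d/4)t^2}$.

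The remaining --- and genuinely delicate --- step is the passage from ${\rm SU}(d)$ to ${\rm U}(d)$, which I expect to be the only real obstacle. The difficulty is that ${\rm U}(d)$ is flat in the central ${\rm U}(1)$ direction, where $\mathrm{Ric}$ vanishes, so the argument above does not apply verbatim. The resolution is to view ${\rm U}(d)$ as the near-product $\big({\rm SU}(d)\times{\rm U}(1)\big)/\mathbb{Z}_d$ through the $d$-fold covering $(\widetilde U,z)\mapsto z\widetilde U$: pull $f$ back to ${\rm SU}(d)\times{\rm U}(1)$ (Haar pulls back to Haar, and the Lipschitz constant is controlled by the bi-Lipschitz distortion of the covering), apply the curvature-based concentration on the ${\rm SU}(d)$ factor while handling the bounded circle factor and the $\mathbb{Z}_d$-quotient by elementary estimates, and recombine the two resulting sub-Gaussian tails. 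Carrying out this bookkeeping simultaneously for all $k$ coordinates --- tracking the distortion constant of the covering, the contribution of the flat circles, and the split of coordinates between the two groups --- is precisely what degrades the clean exponent $d/4$ to the recorded $d/12$. Everything before this step is routine Riemannian concentration together with normalization bookkeeping, and once the ${\rm U}(d)$-versus-${\rm SU}(d)$ comparison is in place the rescaling $f\mapsto f/L$ immediately supplies the $1/L^2$ factor in the exponent.
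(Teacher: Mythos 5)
This theorem is not proved in the paper --- it is imported from \cite{meckes2013spectral} --- so I am comparing your sketch with the proof given there. Your skeleton (Bakry--\'Emery/Gromov--L\'evy on ${\rm SU}(d)$ with $\mathrm{Ric}=\tfrac d2 g$, product Ricci bounds for $k$ copies, then a separate treatment of the center of ${\rm U}(d)$) is indeed the right family of ideas, and you correctly flag the central ${\rm U}(1)$ direction as the crux. But the way you propose to handle that crux does not work, and this is a genuine gap rather than bookkeeping. Pulling $f$ back through $(\widetilde U,z)\mapsto z\widetilde U$ gives $g(\widetilde U,z)=f(z\widetilde U)$ with $|g(\widetilde U,z)-g(\widetilde U,z')|\le L\|(z-z')\widetilde U\|_2=L\sqrt d\,|z-z'|$: the Lipschitz constant in the circle direction \emph{grows} like $\sqrt d$. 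So the circle factor is not a harmless ``bounded factor handled by elementary estimates'': its fluctuation scale is $\sqrt d\cdot O(1)$, and recombining it with the ${\rm SU}(d)$ tail (fluctuation scale $O(1/\sqrt d)$) in quadrature gives at best $\exp(-ct^2/d)$, or a dimension-free $\exp(-ct^2)$, not $\exp(-dt^2/12)$. Your plan, as stated, loses the entire $d$-dependence at exactly the step you identify as the only obstacle.

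The actual proof in \cite{meckes2013spectral} fixes this with a quantitative use of the $\mathbb{Z}_d$-quotient that your sketch does not contain: since $\mathrm{e}^{2\pi i/d}\id\in{\rm SU}(d)$, every $U\in{\rm U}(d)$ can be written $U=\mathrm{e}^{i\theta}V$ with $V\in{\rm SU}(d)$ and $\theta$ in the \emph{fundamental domain} $[0,2\pi/d)$, so the central arc has Hilbert--Schmidt length $2\pi/\sqrt d$ and log-Sobolev constant $O(1/d^2)$. One then does not ``recombine two sub-Gaussian tails''; one tensorizes the log-Sobolev inequalities and feeds in the directional gradient bounds $|\nabla_V g|\le L$, $|\partial_\theta g|\le \sqrt d\,L$, getting an effective LSI constant $2\cdot\tfrac2d L^2+2\cdot O(\tfrac1{d^2})\cdot dL^2=O(L^2/d)$ (numerically $6L^2/d$), after which Herbst's argument yields $\exp(-dt^2/(12L^2))$; the same tensorization handles all $k$ coordinates with no loss in $k$. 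In particular the constant $12$ comes from this weighted combination of the ${\rm SU}(d)$ and arc contributions, not from ``bi-Lipschitz distortion of the covering''. Without the fundamental-domain restriction and the weighted LSI combination, the passage from ${\rm SU}(d)$ to ${\rm U}(d)$ in your proposal fails quantitatively, so the proof is incomplete at its decisive step.
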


We now have all the ingredients to prove the  main result.

\begin{boxedthm}{}
    \begin{thm}[(Improved lower bound for channel learning)]\label{thm:main}
    Let $\eps \in (0, 10^{-4})$, $ d_A, d_B, r\ge 1$ such that  either 
    $d_A = rd_B$ or $d_A\le rd_B/2$ and
    $d_Ad_Br \ge 2500$.  
	    Any  coherent algorithm that constructs $\hat{\cN}$ such that  $\|\cN-\hat{\cN}\|_\diamond \le \eps$  with probability at least $2/3$ needs  at least  a number of channel uses satisfying 
        \bb
            N = \begin{cases}
                \Omega\left(\frac{d_Ad_Br}{\eps \log(d_Br/\eps)} \right) &\text{ when } \quad d_A = rd_B, \\ 
                 \Omega\left(\frac{d_Ad_Br}{\eps^2 \log(d_Br/\eps)} \right) & \text{ when }\quad d_A\le rd_B/2.
            \end{cases}
        \ee
    %uses of $\cN$.
\end{thm}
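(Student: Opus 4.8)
The plan is to feed each of the two explicit families of Stinespring isometries into Theorem~\ref{thm:gen-LB}, using the probabilistic method together with the concentration estimate of Theorem~\ref{lem:corollary17} to exhibit a genuine member of the ensemble class $\mathcal E(d_A,d_B,r,\Theta(\eps),2\eps)$ of log-cardinality $\Omega(d_Ad_Br)$. I would first record the deterministic facts common to both constructions. Each $V_x$ is a genuine isometry: for \eqref{construction-eq} this is \emph{exactly} what fixes the relation $\eps=-2\cos\theta$, since $(\id+O)^\dagger(\id+O)=\id$ reduces on each $2\times 2$ block to $2\eps\cos\theta+\eps^2=0$; for \eqref{eq:V}--\eqref{construction-leq} it is immediate from $\widetilde V_0^\dagger\widetilde V_0=\widetilde V_x^\dagger\widetilde V_x=\id$. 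The channel $\Phi_x$ has Choi rank at most $|E|=r$. And operator-norm closeness is automatic and $U$-independent: $V_x-V_y=U_xOU_x^\dagger-U_yOU_y^\dagger$ in the first case and $V_x-V_y=\eps\,\ket{1}_F\otimes(\widetilde V_x-\widetilde V_y)$ in the second, so $\|V_x-V_y\|_{\mathrm{op}}\le 2\|O\|_{\mathrm{op}}=2\eps$, respectively $\|V_x-V_y\|_{\mathrm{op}}=\eps\|\widetilde V_x-\widetilde V_y\|_{\mathrm{op}}\le 2\eps$. Thus every sampled family is $2\eps$-operator-norm close, and only the diamond separation remains to be controlled.

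For the separation I would start from $\|\Phi_x-\Phi_y\|_\diamond\ge\|J_{\Phi_x}-J_{\Phi_y}\|_1=f(U_x,U_y)$, where $f$ is the function of Lemma~\ref{lem:ppts-Phi}; that lemma supplies a Lipschitz constant $L=4\sqrt{2/d_A}\,\eps$ and a mean bound $\mathbb E[f]\ge c_0\eps$ (universal $c_0>0$) under independent Haar sampling on $\mathrm U(rd_B)$. Applying Theorem~\ref{lem:corollary17} to $-f$ with $t=c_0\eps/2$ gives $\mathbb P[f(U_x,U_y)\le c_0\eps/2]\le\exp\!\big(-rd_B\,t^2/(12L^2)\big)=\exp(-c_1 d_Ad_Br)$ for a universal $c_1>0$ --- the $\eps^2$'s in $t^2$ and $L^2$ cancel, which is why the cardinality will not degrade with $\eps$. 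Sampling $U_1,\dots,U_M$ independently and union-bounding over the $\binom{M}{2}<M^2/2$ pairs, the failure probability is at most $\tfrac12 M^2 e^{-c_1 d_Ad_Br}<1$ whenever $\log M\le c_1 d_Ad_Br/4$; the hypotheses $d_Ad_Br\ge 2500$ and $\eps<10^{-4}$ ensure such an $M$ exists with $M\ge 3$, $\log M=\Omega(d_Ad_Br)$, and $2\eps\in(0,1/2)$ as Theorem~\ref{thm:gen-LB} requires. This produces an ensemble in $\mathcal E(d_A,d_B,r,c_0\eps/4,\,2\eps)$ with $\log M=\Omega(d_Ad_Br)$; after rescaling the construction parameter (replacing $\eps$ by $4\eps/c_0$, still admissible by the smallness hypotheses) so that the channels are pairwise more than $2\eps$-diamond far while $\eta=O(\eps)$, Theorem~\ref{thm:gen-LB} yields $N\ge\Omega\!\big(d_Ad_Br/(\eps\log(d_Br/\eps))\big)$. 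This settles the case $d_A=rd_B$.

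The case $d_A\le rd_B/2$ is the crux, and the main obstacle. A black-box application of Theorem~\ref{thm:gen-LB} to the second ensemble only reproduces the $1/\eps$ bound, because its operator-norm parameter is still $\eta=\Theta(\eps)$ --- and $\eta$ can never be smaller than a constant times the diamond separation, by the inequality $\|\Phi_x-\Phi_y\|_\diamond\le\|V_x-V_y\|_{\mathrm{op}}$ recalled after Theorem~\ref{thm:gen-LB}. To gain the extra factor $1/\eps$ I would re-open the proof of Theorem~\ref{thm:gen-LB} and exploit the \emph{flag} structure of \eqref{eq:V}--\eqref{construction-leq}: the weight-$(1-\eps^2)$ flag-$0$ branch of every $\Phi_x$ is the \emph{same} fixed channel $\widetilde\Phi_0$, so a measurement of the flag reveals $x$-dependent information only on the weight-$\eps^2$ flag-$1$ branch --- heuristically each query is worth only $O(\eps^2)$ bits, so learning the $\Omega(d_Ad_Br)$ bits of the ensemble label costs $N=\Omega(d_Ad_Br/\eps^2)$ queries up to logs. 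Technically: choosing the reference isometry in that proof to share the flag-$0$ branch, the states $\pi_k$ and $\xi_k$ coincide exactly on their (weight $1-\eps^2$) flag-$0$ sectors; their diagonal discrepancy is then $O(\eps^2)$, and the remaining off-diagonal (flag-coherence) discrepancy, although of trace-norm order $\eps$, perturbs the von Neumann entropy only at second order. One therefore aims to replace the per-query estimate $|S(\pi_k)-S(\xi_k)|\le O(\eps\log(d_Br/\eps))$ used in Theorem~\ref{thm:gen-LB} by $O(\eps^2\log(d_Br/\eps))$; summing over the $N$ queries and invoking Fano's inequality $I(X:Y)\ge\Omega(\log M)=\Omega(d_Ad_Br)$ then gives $N\ge\Omega\!\big(d_Ad_Br/(\eps^2\log(d_Br/\eps))\big)$. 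The assumption $d_A\le rd_B/2$ enters exactly here and in Lemma~\ref{lem:ppts-Phi}: it makes the near-orthogonal Kraus system of \eqref{eq:Kraus-1} available and keeps the spectra of $\pi_k,\xi_k$ sufficiently spread for the quadratic (rather than linear) entropy sensitivity to survive up to logarithmic factors. Turning this quadratic-sensitivity heuristic into a rigorous continuity estimate is the real difficulty of the proof.
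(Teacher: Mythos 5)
Your treatment of the case $d_A=rd_B$ is correct and follows the paper's route exactly: verify the isometry property and the uniform bound $\|V_x-V_y\|_{\mathrm{op}}\le 2\eps$, invoke Lemma~\ref{lem:ppts-Phi} for the Lipschitz constant $L=4\sqrt{2/d_A}\,\eps$ and the mean bound $\Omega(\eps)$, apply Theorem~\ref{lem:corollary17} to $-f$ (noting the cancellation of $\eps^2$ between $t^2$ and $L^2$), union bound over $M=e^{\Omega(d_Ad_Br)}$ pairs, and feed the resulting ensemble in $\mathcal{E}(d_A,d_B,r,\Omega(\eps),2\eps)$ into Theorem~\ref{thm:gen-LB}.

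For the case $d_A\le rd_B/2$, however, your proposal has a genuine gap, and you say so yourself: you correctly diagnose that Theorem~\ref{thm:gen-LB} applied as a black box only gives $1/\eps$, and that one must re-open its proof and exploit the flag structure of \eqref{construction-leq}, but you leave the decisive per-query estimate ($O(\eps\log(d_Br/\eps))\to O(\eps^2\log(d_Br/\eps))$) as a heuristic about the off-diagonal flag coherences perturbing the entropy ``only at second order.'' That claim cannot be closed by any generic continuity bound, since entropy continuity in trace distance (Fannes/Audenaert/Berta-type) is \emph{linear}, and the off-diagonal discrepancy between $\pi_k$ and $\xi_k$ genuinely has trace norm $\Theta(\eps)$. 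The paper avoids ever needing a second-order perturbation statement: by subadditivity and the entropy triangle inequality it peels off the flag register, paying $2S(F_k)_{\pi_k}$ per query, and bounds $S(F_k)_{\pi_k}\le h_2(\eps^2)\le 4\eps^2\log(1/\eps)$ via data processing under dephasing of the flag (the $2\times2$ reduced state in \eqref{eq:partial_traces} has diagonal $(1-\eps^2,\eps^2)$); the remaining conditional-entropy difference involves only the $F_k$-traced states, which coincide on the weight-$(1-\eps^2)$ branch (both carry $\widetilde V_0$), so $\|\ptr{F_k}{\pi_k}-\ptr{F_k}{\xi_k}\|_1\le 2\eps^2$ and the \emph{linear} continuity bound of \cite{Berta2024Aug} already yields $8\eps^2\log(d_Br/\eps)$. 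This combination is the missing mechanism in your sketch. (Also, your remark that $d_A\le rd_B/2$ is needed to keep the spectra of $\pi_k,\xi_k$ ``spread'' misattributes its role: the assumption is what makes the embedding $S:\mathcal H_A\to\mathcal H_{BE}$ and the flag-doubled output fit within output dimension and Choi rank, and what allows the near-orthogonal Kraus system \eqref{eq:Kraus-1}; no spectral condition enters the entropy argument.)
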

\end{boxedthm}

In the case $d_A\le rd_B/2$, the  lower bound  is near-optimal given the general upper bounds of \cite{AMele2025,chen2025quantumchanneltomographyestimation}. 

\begin{proof}
In both constructions \eqref{construction-eq} and \eqref{construction-leq}, the Stinespring isometries are $2\eps$ close in operator norm. 
In the case $d_A\le rd_B/2$,  the construction of the isometries $\{{V}^{A\to BE}_x\}_{x\in [M]}$ in \eqref{construction-leq} satisfies

\bb
    \| {V}_x -  {V}_y\|_{\rm{op}} &=  \|\eps \ket{1}\otimes \widetilde{V}_x -  \eps \ket{1}\otimes \widetilde{V}_y\|_{\rm{op}} = \eps  \| \widetilde{V}_x -  \widetilde{V}_y\|_{\rm{op}} \le 2\eps. 
\ee
In the case $d_A=rd_B$, the construction of the isometries $\{{V}^{A\to BE}_x\}_{x\in [M]}$ in \eqref{construction-eq} satisfies
\bb
\|V_x -V_y\|_{\rm{op}}& = \|U_xO U_x^\dagger -U_yO U_y^\dagger\|_{\rm{op}}
\le \|U_xO U_x^\dagger\|_{\rm{op}} +\|U_yO U_y^\dagger\|_{\rm{op}}
= 2\|O\|_{\rm{op}}
= 2\eps. 
\ee

Furthermore, since the function $f:(U_x,U_y) \mapsto \|\Phi_x - \Phi_y\|_1$ is 
$L$-Lipschitz, when we sample two independent unitaries $U_x,U_y\sim \Haar({\rm U}(rd_B))$, by Theorem \ref{lem:corollary17}, we have\footnote{To be precise, we are applying Theorem \ref{lem:corollary17} to $-f$, which is also $L$-Lipschitz.}
\begin{align}
    \prob{\ex{ f(U_x,U_y)}- f(U_x,U_y) \geq \frac{1}{2} \ex{ f(U_x,U_y)}} \le  \exp\left(-\frac{d_Br\eps^2}{12 L^2}\right)  = \mathrm{e}^{-cd_Ad_Br} \eqcolon \delta
\end{align}
with $c$ a universal constant. 
Therefore, with probability at least $1- \delta$, we have 
\begin{align}
    f(U_x,U_y) >  \frac{1}{2} \ex{ f(U_x,U_y)} \ge  \Omega(\eps)
\end{align}
where we have used the lower bound \eqref{eq:LB-1st moment-uni}. Let 
\bb
    M \coloneqq \left\lfloor\exp\left(\tfrac{c}{2}d_Ad_Br-1\right)\right\rfloor
\ee
and let $\{U_x\}_{x\in [M]}$ be i.i.d.\ Haar random matrices. Note that, by the very definition of $M$ and $\delta$, we have $M^2\delta<1$ and $\log M = \Omega(d_Ad_Br)$. %Let us suppose, without loss of generality, that $M\geq 1$. 
By the union bound,

\bb
     &\prob{\exists x\neq y : f(U_x,U_y) < \frac{1}{2}\ex{f(U_x,U_y)} }\\
     &\qquad \qquad \le M(M-1)\; \prob{ f(U_x,U_y) < \frac{1}{2}\ex{f(U_x,U_y)} }\le M^2 \delta<1. 
\ee
Hence, there exists a family $\{U_x\}_{x\in [M]}$ such that, for all $x\neq y$, 
\begin{align}
     \left\|{\Phi}_x -{\Phi}_y\right\|_{\diamond} = f(U_x,U_y) \ge  \frac{1}{2}\ex{f(U_x,U_y)}  \ge \Omega(\eps).
\end{align}
 Hence, we have provided a construction of an ensemble  $\{\Phi_x\}_{x\in[M]} \in \mathcal{E}(d_A, d_B, r, \Omega(\eps), 2\eps)$ of cardinality $\log M = \Omega(d_Ad_Br)$.
By Theorem \ref{thm:gen-LB}, we  conclude:
 \bb
            N \ge  \frac{(2/3)\log(M)-\log2}{4\eta \log(d_Br/\eta)} \ge\Omega\left(  \frac{d_Ad_Br  }{ \eps \cdot  \log(d_Br/\eps)} \right), 
        \ee
    which completes the proof for the case $d_A = rd_B$. 

    For the remaining case $d_A\le rd_B/2$, we return to the proof of Theorem~\ref{thm:gen-LB} (precisely \eqref{eq:UBmutual-gen}) and improve the upper bound  on  the Holevo information using the specificities of the construction \eqref{construction-leq}.
  
    Recall that, from the proof of Theorem \ref{thm:gen-LB}, the mutual information can be bounded as follows \eqref{eq:UBmutual-gen}
\bb\label{eq:UBmutualI}
    &I(X:Y)\\
    &\quad \le  \sum_{k=1}^{N} S\left(\frac{1}{M}\sum_{x=1}^M \pazocal{V}_x \circ \cN_{k-1}(\sigma_{x}^{k-1})\right)-\sum_{k=1}^{N}S\left( \frac{1}{M}\sum_{x=1}^M \pazocal{V}_1\circ \cN_{k-1}(\sigma_{x}^{k-1})\right)
    \\&\quad =\sum_{k=1}^{N} S\left( \tilde{B}_{1} \cdots \tilde{B}_{k-1} F_kB_{k} E_k A_{k+1}\cdots A_{N} E\right)_{\pi_{k}} -\sum_{k=1}^{N}S\left(\tilde{B}_{1} \cdots \tilde{B}_{k-1}F_kB_{k}E_k A_{k+1}\cdots A_{N} E\right)_{\xi_{k}},
\ee
    with $\tilde{B}_{k} = F_kB_kE_k$, $\pazocal{V}_x^{A_k\to \tilde{B}_k}(\,\cdot\,) = V_x^{A_k\to \tilde{B}_k}\,\cdot\,(V_x^{A_k\to \tilde{B}_k})^ \dagger, $
    \bb
    \pi_{k} = \frac{1}{M}\sum_{x=1}^M \pazocal{V}_x \circ \cN_{k-1}(\sigma_{x}^{k-1}) \quad \text{and}\quad  \xi_{k} = \frac{1}{M}\sum_{x=1}^M \pazocal{V}_1\circ \cN_{k-1}(\sigma_{x}^{k-1}).
    \ee 
    Observe that the construction  \eqref{construction-leq} implies that, calling $\rho^{FBE} = V_x^{A\to FBE} \phi (V_x^{A\to FBE})^\dagger$ (where $\phi$ is any arbitrary state in $\mathcal{H}_A$) we have
\bb\label{eq:partial_traces}
     \ptr{F}{\rho}&= (1-\eps^2)  \widetilde{V}_0 \phi \widetilde{V}_0^\dagger+ \eps^2  \widetilde{V}_x \phi \widetilde{V}_x^\dagger,\\
    \ptr{BE}{\rho} &=  \begin{pmatrix}
1 - \eps^2 & \sqrt{1-\eps^2}\eps \Tr[\widetilde{V}_0 \phi \widetilde{V}_x^\dagger]\\
\sqrt{1-\eps^2}\eps \Tr[\widetilde{V}_x \phi \widetilde{V}_0^\dagger] & \eps^2
\end{pmatrix}.
\ee

For any bipartite state $\zeta_{AB}$, the von Neumann entropy satisfies the subadditivity property   $ S(AB)_{\zeta}\le S(A)_{\zeta}+S(B)_{\zeta}$ and the triangle inequality $|S(A)_{\zeta} - S(B)_{\zeta}|\le S(AB)_{\zeta}$ \cite{Nielsen2010Dec}, so we obtain  for all $k\in [N]$:
\bb\label{eq:entropy}
&S\left(\tilde{B}_{1} \cdots \tilde{B}_{k-1} F_kB_{k} E_k A_{k+1}\cdots A_{N} E\right)_{\pi_{k}}\\
&\qquad\qquad  \le S\left(\tilde{B}_{1} \cdots \tilde{B}_{k-1} B_{k} E_k A_{k+1}\cdots A_{N} E\right)_{\pi_{k}} + S(F_{k})_{\pi_{k}},\\[0.5em]
-& S\left(\tilde{B}_{1} \cdots \tilde{B}_{k-1} F_kB_{k} E_k A_{k+1}\cdots A_{N} E\right)_{\xi_{k}}\\
&\qquad \qquad \le -S\left(\tilde{B}_{1} \cdots \tilde{B}_{k-1} B_{k} E_k A_{k+1}\cdots A_{N} E\right)_{\xi_{k}}  +S\left(F_{k} \right)_{\xi_{k}}.
\ee
Therefore 
\bb\label{eq:UBmutualII}
    &S\left( \tilde{B}_{1} \cdots \tilde{B}_{k-1} F_kB_{k} E_k A_{k+1}\cdots A_{N} E\right)_{\pi_{k}} -S\left(\tilde{B}_{1} \cdots \tilde{B}_{k-1} F_kB_{k}E_k A_{k+1}\cdots A_{N} E\right)_{\xi_{k}}
    \\& \le 2S\left( F_k \right)_{\pi_{k}} +S\left( \tilde{B}_{1} \cdots \tilde{B}_{k-1} B_{k} E_k A_{k+1}\cdots A_{N} E\right)_{\pi_{k}} -S\left(\tilde{B}_{1} \cdots \tilde{B}_{k-1}B_{k}E_k A_{k+1}\cdots A_{N} E\right)_{\xi_{k}}
    \\&= 2S\left( F_k \right)_{\pi_{k}} +S\left( B_{k} E_k|  \tilde{B}_{1} \cdots \tilde{B}_{k-1} A_{k+1}\cdots A_{N} E\right)_{\pi_{k}} -S\left( B_{k} E_k| \tilde{B}_{1} \cdots \tilde{B}_{k-1} A_{k+1}\cdots A_{N} E\right)_{\xi_{k}}.
\ee
Now, observe that we have that  $\ptr{B_{k}E_kF_k}{\pi_{k}} = \ptr{B_{k}E_kF_k}{\xi_{k}}$ so we can apply the continuity bound of \cite[Theorem 5]{Berta2024Aug} (see also \cite{Alicki_2004, Winter_2016, Audenaert_2025})
\bb
    & S\left(B_kE_k|\tilde{B}_{1} \cdots \tilde{B}_{k-1} A_{k+1}\cdots A_{N} E\right)_{\pi_{k}} - S\left(B_kE_k|\tilde{B}_{1} \cdots \tilde{B}_{k-1} A_{k+1}\cdots A_{N} E\right)_{\xi_{k}}
    \\&\quad \le  \|\ptr{F_k}{\pi_{k}} - \ptr{F_k}{\xi_{k}}\|_1\log(|B_{k}E_k|^2) + h_2( \|\ptr{F_k}{\pi_{k}} - \ptr{F_k}{\xi_{k}}\|_1). 
\ee
Remark that, calling $\tilde{\sigma}_x^{k-1} \coloneqq \pazocal{N}_{k-1}(\sigma_x^{k-1})$, by \eqref{eq:partial_traces} we have
\bb
 \left\|\ptr{F_k}{\pi_{k}} - \ptr{F_k}{\xi_{k}}\right\|_1 &=\bigg\|\frac{1}{M}\sum_{x=1}^M \left(  (1-\eps^2)  \widetilde{V}_0 \tilde{\sigma}_x^{k-1}  \widetilde{V}_0^\dagger + \eps^2  \widetilde{V}_x \tilde{\sigma}_x^{k-1} \widetilde{V}_x^\dagger\right)
 \\&\qquad -\frac{1}{M}\sum_{x=1}^M  \left((1-\eps^2)  \widetilde{V}_0 \tilde{\sigma}_x^{k-1}  \widetilde{V}_0^\dagger + \eps^2  \widetilde{V}_1 \tilde{\sigma}_x^{k-1} \widetilde{V}_1^\dagger\right)\bigg\|_1
 \\&=\left\| \eps^2 \frac{1}{M}\sum_{x=1}^M (\widetilde{V}_x \tilde{\sigma}_x^{k-1}  \widetilde{V}_x^\dagger-\widetilde{V}_1 \tilde{\sigma}_x^{k-1}  \widetilde{V}_1^\dagger)  \right\|_1
 \\&\le 2\eps^2. 
\ee
Hence 
\bb\label{eq:UBmutualIII}
    & S\left(B_kE_k|\tilde{B}_{1} \cdots \tilde{B}_{k-1} A_{k+1}\cdots A_{N} E\right)_{\pi_{k}} - S\left(B_kE_k|\tilde{B}_{1} \cdots \tilde{B}_{k-1} A_{k+1}\cdots A_{N} E\right)_{\xi_{k}}
    \\&\quad \le  \|\ptr{F_k}{\pi_{k}} - \ptr{F_k}{\xi_{k}}\|_1\log(|B_{k}E_k|^2) + h_2( \|\ptr{F_k}{\pi_{k}} - \ptr{F_k}{\xi_{k}}\|_1)
    \\&\quad \le 8\eps^2\log(d_Br/\eps). 
\ee
Moreover, by applying  the data processing inequality for  the  dephasing channel, we obtain 
\bb\label{eq:UBmutualIV}
S\left(F_k\right)_{\pi_{k}}
&= S\left( \frac{1}{M}\sum_{x=1}^M \begin{pmatrix}
1 - \eps^2 & \sqrt{1-\eps^2}\eps \Tr[\widetilde{V}_0 \tilde{\sigma}_x^{k-1} \widetilde{V}_x^\dagger]\\
\sqrt{1-\eps^2}\eps \Tr[\widetilde{V}_x \tilde{\sigma}_x^{k-1}\widetilde{V}_0^\dagger] & \eps^2
\end{pmatrix}\right)
\\&\le 
S\left( \frac{1}{M}\sum_{x=1}^M \begin{pmatrix}
1 - \eps^2 & 0\\
0 & \eps^2
\end{pmatrix}\right)
\\&= h_2(\eps^2)
\\&\le 4 \eps^2\log(1/\eps). 
\ee
Finally combining \eqref{eq:UBmutualI}, \eqref{eq:UBmutualII}, \eqref{eq:UBmutualIII} and \eqref{eq:UBmutualIV}, we get
\bb
  I(X:Y)&\le \sum_{k=1}^N 2S\left( F_k \right)_{\pi_{k}}+S\left( B_{k} E_k|  \tilde{B}_{1} \cdots \tilde{B}_{k-1} A_{k+1}\cdots A_{N} E\right)_{\pi_{k}} \\[-1.5em]
  &\qquad\phantom{\sum_{k=1}^N 2S\left( F_k \right)_{\pi_{k}}}  -S\left( B_{k} E_k| \tilde{B}_{1} \cdots \tilde{B}_{k-1} A_{k+1}\cdots A_{N} E\right)_{\xi_{k}}
    \\[-0.5em]&\le \sum_{k=1}^N (8 \eps^2\log(1/\eps) + 8\eps^2\log(d_Br/\eps))
    \\&\le  16N\eps^2\log(d_Br/\eps).
\ee
Since $I(X:Y)\ge (2/3)\log(M)-\log2 \ge \Omega(d_Ad_Br)$ we deduce that 
\bb
    N \ge \Omega\left( \frac{d_Ad_Br}{\eps^2\log(d_Br/\eps)}\right). 
\ee
This concludes the proof.
\end{proof}

\section{Conclusion}

We have proved that learning an unknown quantum channel to diamond distance $\varepsilon$ requires $\widetilde{\Omega}\bigl(d_A d_B r / \varepsilon \bigr)$ queries when  $d_A=rd_B$ and $\widetilde{\Omega}\bigl(d_A d_B r / \varepsilon^2\bigr)$ queries when  $d_A\le rd_B/2$, improving upon the previous $\Omega(d_A d_B r)$ bound \cite{Girardi2025Dec}.  So the query complexity $N$ is characterised   by $N=\widetilde{\Theta}\bigl(d_A d_B r / \varepsilon^2 \bigr)$ when $d_A\le rd_B/2$ \cite{AMele2025,chen2025quantumchanneltomographyestimation} and satisfies $\widetilde{\Omega}\bigl(d_A d_B r / \varepsilon \bigr)\le N\le O\bigl(\min\{d_A^{2.5}/\eps, d_A^2/\eps^2\} \bigr)$ when $d_A=rd_B$ \cite{chen2025quantumchanneltomographyestimation}.  

It remains an interesting open problem to determine the optimal query complexity of channel learning in diamond distance for the parameter ranges $r \in  [\frac{d_A}{d_B},\frac{2d_A}{d_B})$, and to investigate the role of quantum memory in the query complexity.

\subsection*{Acknowledgments}
FG acknowledges financial support from the European Union (ERC StG ETQO, Grant Agreement no.\ 101165230).
\bibliography{bib}

@article{Winter_2016,
   title={Tight Uniform Continuity Bounds for Quantum Entropies: Conditional Entropy, Relative Entropy Distance and Energy Constraints},
   volume={347},
   ISSN={1432-0916},
   url={http://dx.doi.org/10.1007/s00220-016-2609-8},
   DOI={10.1007/s00220-016-2609-8},
   number={1},
   journal={Communications in Mathematical Physics},
   publisher={Springer Science and Business Media LLC},
   author={Winter, Andreas},
   year={2016},
   month=mar, pages={291–313} }

@article{Alicki_2004,
   title={Continuity of quantum conditional information},
   volume={37},
   ISSN={1361-6447},
   url={http://dx.doi.org/10.1088/0305-4470/37/5/L01},
   DOI={10.1088/0305-4470/37/5/l01},
   number={5},
   journal={Journal of Physics A: Mathematical and General},
   publisher={IOP Publishing},
   author={Alicki, R and Fannes, M},
   year={2004},
   month=jan, pages={L55–L57} }

@article{Audenaert_2025,
   title={Continuity Bounds for Quantum Entropies Arising From a Fundamental Entropic Inequality},
   volume={71},
   ISSN={1557-9654},
   url={http://dx.doi.org/10.1109/TIT.2025.3586478},
   DOI={10.1109/tit.2025.3586478},
   number={9},
   journal={IEEE Transactions on Information Theory},
   publisher={Institute of Electrical and Electronics Engineers (IEEE)},
   author={Audenaert, Koenraad and Bergh, Bjarne and Datta, Nilanjana and Jabbour, Michael G. and Capel, Ángela and Gondolf, Paul},
   year={2025},
   month=sep, pages={7029–7038} }

@book{Nielsen2010Dec,
	author = {Nielsen, Michael A. and Chuang, Isaac L.},
	title = {{Quantum Computation and Quantum Information: 10th Anniversary Edition}},
	journaltitle = {Higher Education from Cambridge University Press},
	year = {2010},
	publisher = {Cambridge University Press}
}

@article{Mele_2025,
   title={Learning quantum states of continuous-variable systems},
   volume={21},
   ISSN={1745-2481},
   url={http://dx.doi.org/10.1038/s41567-025-03086-2},
   DOI={10.1038/s41567-025-03086-2},
   number={12},
   journal={Nature Physics},
   publisher={Springer Science and Business Media LLC},
   author={Mele, Francesco A. and Mele, Antonio A. and Bittel, Lennart and Eisert, Jens and Giovannetti, Vittorio and Lami, Ludovico and Leone, Lorenzo and Oliviero, Salvatore F. E.},
   year={2025},
   month=nov, pages={2002–2008} }

@article{collins2006integration,
  title={Integration with respect to the Haar measure on unitary, orthogonal and symplectic group},
  author={Collins, Beno{\^ i}t and {\'S}niady, Piotr},
  journal={Communications in Mathematical Physics},
  volume={264},
  number={3},
  pages={773--795},
  year={2006},
  publisher={Springer},
  doi={10.1007/s00220-006-1554-3}
}

@inproceedings{haah2023query,
  title={Query-optimal estimation of unitary channels in diamond distance},
  author={Haah, Jeongwan and Kothari, Robin and O’Donnell, Ryan and Tang, Ewin},
  booktitle={2023 IEEE 64th Annual Symposium on Foundations of Computer Science (FOCS)},
  pages={363--390},
  year={2023},
  organization={IEEE}
}

@article{kretschmann2008information,
  title={The information-disturbance tradeoff and the continuity of Stinespring's representation},
  author={Kretschmann, Dennis and Schlingemann, Dirk and Werner, Reinhard F},
  journal={IEEE transactions on information theory},
  volume={54},
  number={4},
  pages={1708--1717},
  year={2008},
  publisher={IEEE}
}

@book {FANO,
  author = {Fano, R. M.},
  title = {Transmission of information: {A} statistical theory of communications},
  publisher = {The M.I.T. Press, Cambridge, Mass.; John Wiley \& Sons, Inc., New York-London},
  year = {1961},
  pages = {x+389}
}

@article{chen2025quantumchanneltomographyestimation,
	author = {Chen, Kean and Yu, Nengkun and Zhang, Zhicheng},
	title = {{Quantum channel tomography and estimation by local test}},
	journal = {arXiv},
	year = {2025},
	month = dec,
	date = {2025-12-15},
	urldate = {2026-01-07},
	eprint = {2512.13614},
	doi = {10.48550/arXiv.2512.13614}
}

@article{AMele2025,
	author = {Mele, Antonio Anna and Bittel, Lennart},
	title = {{Optimal learning of quantum channels in diamond distance}},
	journal = {arXiv},
	year = {2025},
	month = dec,
	date = {2025-12-11},
	urldate = {2026-01-07},
	eprint = {2512.10214},
	doi = {10.48550/arXiv.2512.10214}
}

@techreport{gu2013moments,
  title       = {Moments of random matrices and Weingarten functions},
  author      = {Gu, Yinzheng},
  year        = {2013},
  institution = {Queen's University},
  url         = {https://qspace.library.queensu.ca/server/api/core/bitstreams/cee37ba4-2035-48e0-ac08-2974e082a0a9/content},
  type        = {Technical report}
}

@article{Rosenthal2024Sep,
	author = {Rosenthal, Gregory and Aaronson, Hugo and Subramanian, Sathyawageeswar and Datta, Animesh and Gur, Tom},
	title = {{Quantum Channel Testing in Average-Case Distance}},
	journal = {arXiv},
	year = {2024},
	month = sep,
	date = {2024-09-19},
	urldate = {2026-01-07},
	eprint = {2409.12566},
	doi = {10.48550/arXiv.2409.12566}
}

@article{holevo1973bounds,
  title={Bounds for the quantity of information transmitted by a quantum communication channel},
  author={Holevo, Alexander Semenovich},
  journal={Problems of Information Transmission},
  volume={9},
  pages={177–183},
  year={1973}
}

@article{flammia2012quantum,
	title={Quantum tomography via compressed sensing: error bounds, sample complexity and efficient estimators},
	author={Flammia, Steven T. and Gross, David and Liu, Yi-Kai and Eisert, Jens},
	journal={New Journal of Physics},
	volume={14},
	number={9},
	pages={095022},
	year={2012},
	publisher={IOP Publishing},
doi={10.1088/1367-2630/14/9/095022}
}

@inproceedings{oufkir2023sample,
  title={Sample-Optimal Quantum Process Tomography with non-adaptive Incoherent Measurements},
  author={Oufkir, Aadil},
  booktitle={2023 IEEE International Symposium on Information Theory (ISIT'23)},
  year={2023},
  doi = {10.1109/ISIT54713.2023.10206538},
}

@article{Bluhm2024Mar,
	author = {Bluhm, Andreas and Caro, Matthias C. and Oufkir, Aadil},
	title = {{Hamiltonian Property Testing}},
	journal = {arXiv},
	year = {2024},
	month = mar,
	date = {2024-03-05},
	urldate = {2026-01-05},
	eprint = {2403.02968},
	doi = {10.48550/arXiv.2403.02968}
}

@article{fawzi2023lower,
	author = {Fawzi, Omar and Oufkir, Aadil and Franca, Daniel Stilck},
	title = {{Lower Bounds on Learning Pauli Channels with Individual Measurements}},
	journal = {IEEE Transactions on Information Theory},
	shortjournal = {IEEE Trans. Inf. Theory},
    year = {2025},
	publisher = {IEEE}
}

@article{lowe2022lower,
author = {Lowe, Angus and Nayak, Ashwin},
title = {Lower Bounds for Learning Quantum States with Single-Copy Measurements},
year = {2025},
issue_date = {March 2025},
publisher = {Association for Computing Machinery},
address = {New York, NY, USA},
volume = {17},
number = {1},
issn = {1942-3454},
url = {https://doi.org/10.1145/3717450},
doi = {10.1145/3717450},
abstract = {We study the problems of quantum tomography and shadow tomography using measurements performed on individual, identical copies of an unknown&nbsp;d-dimensional state. We first revisit known lower bounds&nbsp;[23] on quantum tomography with accuracy ε in trace distance, when the measurement choices are independent of previously observed outcomes, i.e., they are nonadaptive. We give a succinct proof of these results through the χ2-divergence between suitable distributions. Unlike prior work, we do not require that the measurements be given by rank-one operators. This leads to stronger lower bounds when the learner uses measurements with a constant number of outcomes (e.g., two-outcome measurements). In particular, this rigorously establishes the optimality of the folklore “Pauli tomography” algorithm in terms of its sample complexity. We also derive novel bounds of  (Omega (r^2 d/epsilon ^2))  and  (Omega (r^2 d^2/epsilon ^2))  for learning rank&nbsp;r states using arbitrary and constant-outcome measurements, respectively, in the nonadaptive case.In addition to the sample complexity, a resource of practical significance for learning quantum states is the number of unique measurement settings required (i.e., the number of different measurements used by an algorithm, each possibly with an arbitrary number of outcomes). Motivated by this consideration, we employ concentration of measure of χ2-divergence of suitable distributions to extend our lower bounds to the case where the learner performs possibly adaptive measurements from a fixed set of&nbsp; (exp (O(d)))  possible measurements. This implies in particular that adaptivity does not give us any advantage using single-copy measurements that are efficiently implementable. We also obtain a similar bound in the case where the goal is to predict the expectation values of a given sequence of observables, a task known as shadow tomography. Finally, in the case of adaptive, single-copy measurements implementable with polynomial-size circuits, we prove that a straightforward strategy based on computing sample means of the given observables is optimal.},
journal = {ACM Trans. Comput. Theory},
month = mar,
articleno = {7},
numpages = {42},
keywords = {Quantum state tomography, sample complexity, lower bounds, single-copy measurements, adaptive measurements, circuit complexity, classical shadows, shadow tomography}
}

@article{haah2017sample,
  title={Sample-Optimal Tomography of Quantum States},
  author={Haah, Jeongwan and Harrow, Aram W. and Ji, Zhengfeng and Wu, Xiaodi and Yu, Nengkun},
  journal={IEEE Transactions on Information Theory},
  volume={63},
  number={9},
  pages={5628--5641},
  year={2017},
  publisher={IEEE},
  doi={10.1109/TIT.2017.2719044}
}

@article{Berta2024Aug,
	author = {Berta, Mario and Lami, Ludovico and Tomamichel, Marco},
	title = {{Continuity of entropies via integral representations}},
	journal = {arXiv},
	year = {2024},
	month = aug,
	date = {2024-08-27},
	urldate = {2025-12-24},
	eprint = {2408.15226},
	doi = {10.1109/TIT.2025.3527858}
}

@article{meckes2013spectral,
	title = {Spectral measures of powers of random matrices},
	volume = {18},
	doi = {10.1214/ECP.v18-2551},
	journal = {Electronic Communications in Probability},
	author = {Meckes, Elizabeth and Meckes, Mark},
	month = jan,
	year = {2013},
	keywords = {60B15, 60B20, 60E15, 60F05, Logarithmic Sobolev inequality, spectral measure, Uniform random matrices, Wasserstein distance},
	pages = {1--13}
}

@article{Girardi2025Dec,
	author = {Girardi, Filippo and Mele, Francesco Anna and Zhao, Haimeng and Fanizza, Marco and Lami, Ludovico},
	title = {{Random Stinespring superchannel: converting channel queries into dilation isometry queries}},
	journal = {arXiv},
	year = {2025},
	month = dec,
	date = {2025-12-23},
	urldate = {2025-12-24},
	eprint = {2512.20599},
	doi = {10.48550/arXiv.2512.20599}
}
\appendix

\section{A weaker lower bound using existing packing nets}\label{sec:LB-sqrt(eps)}

A natural approach to constructing such an ensemble in $\mathcal{E}(d_A, d_B, r, \eps, \eta)$ is to use packing nets. 
Assume that $d_B \geq 2$. From \cite[Lemma 14]{Girardi2025Dec}, we have
\bb
\log \cM\left(\mathcal{C}(d_A,d_B,\lfloor \tfrac{r}{2}\rfloor),\ \|\cdot\|_{\diamond},\ 1/2\right) = \Theta\left(r\,d_A d_B\right),
\ee
where $\cM(\mathcal{S}, |\cdot|, \delta)$ denotes the $\delta$-packing number of the set $\mathcal{S}$ with respect to the norm $|\cdot|$.

Let $M = \cM\big(\mathcal{C}(d_A,d_B,\lfloor \frac{r}{2}\rfloor),\ \|\cdot\|_{\diamond},\ 1/2\big)$, and let $\{\widetilde{\Phi}_x\}_{x\in[M]}$ be a $1/2$-diamond-norm packing of quantum channels, with corresponding Stinespring isometries $\{\widetilde{V}_x\}_{x\in[M]}$.

For a given $\varepsilon \in (0,\frac{1}{4})$ and each $x \in [M]$, we define the convex mixture
\bb\Phi_x = (1-4\eps)\widetilde{\Phi}_1 + 4\eps \widetilde{\Phi}_x.\ee
This is a valid quantum channel of Choi rank at most $2\lfloor \frac{r}{2}\rfloor \le r$. 
We observe that for any distinct $x,y \in [M]$,
\bb
\|\Phi_x - \Phi_y\|_\diamond
&= 4\varepsilon \|\widetilde{\Phi}_x - \widetilde{\Phi}_y\|_\diamond > 2\varepsilon,
\ee
since $\|\widetilde{\Phi}_x - \widetilde{\Phi}_y\|_\diamond > 1/2$ by the packing property.
Moreover, by \cite{kretschmann2008information}, we have 
\bb
\inf_{V_{\Phi_x}} \|V_{\Phi_x} - V_1\|_{\mathrm{op}}^2 \le \| \Phi_x -\Phi_1\|_{\diamond} =4\eps  \| \widetilde{\Phi}_x -\widetilde{\Phi}_1\|_{\diamond}\le 4\eps,
\ee 
where $V_1$ is a Stinespring isometry for $\Phi_1$. Let $V_x$ be a Stinespring isometry for $\Phi_x$ achieving this infimum. Then for all $x,y \in [M]$,
\bb
\|V_x - V_y\|_{\mathrm{op}}&\le \|V_x - V_1\|_{\mathrm{op}} +\|V_y - V_1\|_{\mathrm{op}} \le 4\sqrt{\eps}. 
\ee
Thus, $\{\Phi_x\}_{x\in[M]} \in \mathcal{E}(d_A,d_B, r,\varepsilon,4\sqrt{\varepsilon})$, which implies the lower bound from Theorem \ref{thm:gen-LB}
\begin{align}
    N \ge  \frac{(2/3)\log(M)-\log 2}{4\eta \log(d_Br/\eta)} \ge \Omega\left(\frac{d_Ad_Br}{\sqrt{\eps}\log(d_Br/\sqrt{\eps})}\right). 
\end{align}
%\ao{This lower bound can be improved when $d_A=rd_B$. In this case, any Stinepring isometry of environment dimension $r$ is unitary. Hence we have that   }

\section{Existence of the quantum channel  $\widetilde{\Phi}_0$ }\label{sec:existence}
In this section, we want to show the existence of a quantum channel  $\widetilde{\Phi}_0$  with  Kraus operators $\{K_{0,i}\}_{i\in [r]}$ satisfying
\begin{align}
    \left|\Tr\left[ K_{0,i}^\dagger  K_{0,j}\right]\right| \le \frac{2d_A}{r}\delta_{i,j}\,, \quad \forall i, j\in [r]. 
\end{align} 
We make cases depending on whether $d_A \le d_B$ or not. 
\begin{itemize}
    \item \textbf{Case $1$:}  $d_A \le d_B$, let $k = \bigl\lfloor \frac{d_B}{d_A} \bigr\rfloor \ge 1$. 
We decompose $\mathbb{C}^{d_B} \simeq \bigl(\bigoplus_{i=1}^k \mathbb{C}^{d_A}\bigr) \oplus \mathbb{C}^{d_C}$, 
where $d_C = d_B - kd_A < d_A$.

For each block $A_i \simeq A$ ($i = 1, \dots, k$), we can choose $l = d_A^2$ orthogonal 
$d_A \times d_A$ unitary matrices $\{U_{i,j}\}_{j \in [l]}$ (for example  the 
generalised Pauli operators). Since $kl = \bigl\lfloor \frac{d_B}{d_A} \bigr\rfloor d_A^2 \ge \bigl\lceil\frac{d_Ad_B}{2}\bigl\rceil\ge\bigl\lceil \frac{r}{2}\bigr\rceil $, we may 
select a subset $S \subset [k] \times [l]$ with $|S| = \bigl\lceil \frac{r}{2}\bigr\rceil$. For each $(i,j) \in S$, define the 
Kraus operator
\bb
K_{i,j} = \left( 0 \oplus \sqrt{\tfrac{1}{|S|}}\, U_{i,j} \right),
\ee
where the direct sum is taken with respect to the decomposition above, and $U_{i,j}$ acts 
nontrivially only on the $i$-th $\mathbb{C}^{d_A}$ summand.

We then verify:
\begin{itemize}
    \item[(a)] \textbf{Completeness:}
    \bb
    \sum_{(i,j) \in S} K_{i,j}^\dagger K_{i,j}
    = \sum_{(i,j) \in S} \frac{1}{|S|} \, U_{i,j}^\dagger U_{i,j}
    = \mathbb{I}_A.
    \ee
    
    \item[(b)] \textbf{Orthogonality:} For all $(i,j), (i',j') \in S$,
    \bb
    \left|\Tr\!\big[ K_{i,j}^\dagger K_{i',j'} \big]\right|
    = \frac{d_A}{|S|} \, \delta_{i,i'} \delta_{j,j'} \le \frac{2d_A}{r} \, \delta_{i,i'} \delta_{j,j'}.
    \ee
    
    \item[(c)] \textbf{Kraus rank:} The number of Kraus operators is  $|S|\le r$.
\end{itemize}
    \item \textbf{Case $2$:} $d_A > d_B$, let $k = \bigl\lfloor \frac{d_A}{d_B} \bigr\rfloor \in [1, r]$ and write $d_A = k d_B + d_C$ with $0 \le d_C < d_B$. We can then decompose $\id_A = \id_{B_1} \oplus \cdots \oplus \id_{B_k} \oplus \id_{C}$, where each $B_i \simeq B$ (i.e., $\dim B_i = d_B$). 

    For each block $B_i$ ($i=1,\dots,k$), construct $l = \bigl\lceil \frac{r}{2k} \bigr\rceil \in [1, d_B^2]$ orthogonal $d_B \times d_B$ unitary matrices $\{U_{i,j}\}_{j \in [l]}$ that are supported on $B_i$ and define the corresponding $d_A \times d_B$ matrices
    \bb
    K_{i,j} = \bigl(0 \oplus \tfrac{1}{\sqrt{l}}\, U_{i,j}\bigr) \quad (j \in [l]),
    \ee
    where the direct sum is taken with respect to the decomposition $\mathbb{C}^{d_A} \simeq \bigl(\bigoplus_{i=1}^k \mathbb{C}^{d_B}\bigr) \oplus \mathbb{C}^{d_C}$ and $U_{i,j}$ acts nontrivially only on the $i$-th $d_B$-dimensional summand.
    
    For the remaining block $C$, since $d_C < d_B$ we can apply Case $1$ and construct  $r' =  \bigl\lceil \frac{r d_C}{2d_A} \bigr\rceil \in [1, d_C^2\lfloor\frac{d_B}{d_C}\rfloor]$ orthogonal $d_C \times d_B$ isometries $\{V_{i'}\}_{i' \in [r']}$  and define
    \bb
    K_{k+1, i'} = \bigl(0 \oplus \tfrac{1}{\sqrt{r'}}\, V_{i'}\bigr) \quad (i' \in [r']),
    \ee
    where now $V_{i'}$ acts nontrivially only on the $\mathbb{C}^{d_C}$ summand. We can check
    \begin{itemize}
        \item[(a)] \textbf{Completeness:}
        \bb
            \sum_{i=1}^k \sum_{j=1}^l K_{i,j}^\dagger K_{i,j} 
            + \sum_{i'=1}^{r'} K_{k+1,i'}^\dagger K_{k+1,i'} 
            &= \sum_{i=1}^k \sum_{j=1}^l \frac{1}{l}\,\mathbb{I}_{B_i} 
               + \sum_{i'=1}^{r'} \frac{1}{r'}\,\mathbb{I}_{C} \\
            &= \mathbb{I}_A.
        \ee
        
        \item[(b)] \textbf{Orthogonality:} For all $i, i' \in [k]$ and $j, j' \in [l]$,
        \bb
            \Tr\!\big[K_{i,j}^\dagger K_{i',j'}\big] 
            &= \delta_{i,i'}\delta_{j,j'} \frac{d_B}{l} 
            \le \delta_{i,i'}\delta_{j,j'} \frac{2d_A}{r},
        \ee
        and for $i', i'' \in [r']$,
        \bb
            \Tr\!\big[K_{k+1,i'}^\dagger K_{k+1,i''}\big] 
            = \delta_{i',i''} \frac{d_C}{r'} 
            \le \delta_{i',i''} \frac{2d_A}{r}.
        \ee
        
        \item[(c)] \textbf{Kraus rank:} The total number of Kraus operators is
        \bb
            lk + r' 
            = \Bigl\lfloor\frac{r}{2k}\Bigr\rfloor k 
            + \bigl\lceil \frac{r d_C}{2d_A} \bigr\rceil
            \le r.
        \ee
    \end{itemize}
\end{itemize}

\section{Proof of Lemma \ref{lem:ppts-Phi}}\label{app:first-moment-and-lip}
In this section we prove Lemma \ref{lem:ppts-Phi} which we restate.

\begin{lemma*}[(Restatement of Lemma \ref{lem:ppts-Phi})]
 Let us suppose that either $d_A = rd_B$ or $d_A\le rd_B/2$ and, given any unitary $U_i\in{\rm U}(rd_B)$, let $\Phi_i$ be the channel constructed from according to \eqref{construction-eq} or according to \eqref{eq:V} and \eqref{construction-leq}, respectively. Then, calling $J_{\Phi_i}$ the Choi state of the channel $\Phi_i$, the function 
 \bb
 f:(U_1, U_2)\in {\rm U}(rd_B)^2 \mapsto \|J_{\Phi_1} - J_{\Phi_2} \|_1
 \ee
is $L\coloneqq 4\sqrt{\frac{2}{d_A}}\eps$-Lipschitz with respect to the $\ell_2$-sum of the 2-norms, namely
\bb
    |f(U_1,U_2)-f(U'_1,U'_2)|\leq  L  \|(U_1,U_2) -(U_1',U_2')\|_2
\ee
for all $U_1,U_1',U_2,U_2'\in {\rm U}(rd_B)$, where $\|(A,B)\|_2\coloneqq\sqrt{\|A\|_2^2+\|B\|_2^2}$. % is the $\ell_2$-sum of the 2-norms.
Furthermore, if we consider independent random unitaries $U_1,U_2 \sim \Haar({\rm U}(rd_B))$, we have
\bb
\ex{\|J_{\Phi_1} - J_{\Phi_2}\|_1}\ge  \Omega(\eps). 
\ee
\end{lemma*}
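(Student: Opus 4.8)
The proof naturally splits into two independent parts: the Lipschitz bound and the first-moment lower bound.

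\textbf{Part 1: The Lipschitz estimate.} I would prove this by a perturbation argument. The map $(U_1,U_2) \mapsto \|J_{\Phi_1} - J_{\Phi_2}\|_1$ is a composition of $(U_1, U_2) \mapsto J_{\Phi_1} - J_{\Phi_2}$ (an affine-quadratic map in the $U_i$) followed by the trace norm (which is $1$-Lipschitz with respect to itself, and in particular $\|X\|_1 \le \sqrt{\mathrm{rank}(X)}\,\|X\|_2$). So it suffices to bound how much $J_{\Phi_i}$ moves in trace norm when $U_i$ moves in $2$-norm. Writing $J_{\Phi_i} = (\mathrm{Id} \otimes \Phi_i)(\proj{\Psi})$ with $\Phi_i(\cdot) = \Tr_E[\widetilde V_i \cdot \widetilde V_i^\dagger]$, and $\widetilde V_i$ depending linearly on $U_i$ (it is $U_i S$ in the second construction, and $U_i(\id+O)U_i^\dagger$ in the first), I would expand $\widetilde V_i(\cdot)\widetilde V_i^\dagger$ to first order: replacing $U_i$ by $U_i'$ changes $\widetilde V_i$ by something of operator norm controlled by $\eps \|U_i - U_i'\|_{\mathrm{op}} \le \eps\|U_i - U_i'\|_2$ (the factor $\eps$ comes from the $\eps$ in the definitions of $V_x$ — note $O$ has operator norm $\eps$ and $\widetilde V_x$ enters $J_{\Phi_x}$ only through the $\eps$-weighted branch, or after tracing out). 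Then a triangle-inequality split of the form $\|V\zeta V^\dagger - V'\zeta V'^\dagger\|_1 \le \|(V-V')\zeta V^\dagger\|_1 + \|V'\zeta(V-V')^\dagger\|_1$, exactly as in the proof of Theorem~\ref{thm:gen-LB}, gives a bound of the type $C\eps \|U_i - U_i'\|_2 / \sqrt{d_A}$, where the $1/\sqrt{d_A}$ arises because $\Tr_A[\proj{\Psi}] = \id_A/d_A$ makes the relevant reduced state have trace-norm weight $1/d_A$ on each branch. Summing the contributions from $U_1$ and $U_2$ and tracking constants should yield the stated $L = 4\sqrt{2/d_A}\,\eps$. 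The main subtlety here is keeping the $1/\sqrt{d_A}$ and not losing it — one must apply Cauchy–Schwarz against the maximally entangled state carefully rather than bounding crudely.

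\textbf{Part 2: The first-moment lower bound.} This is the harder part. I want $\ex{\|J_{\Phi_1} - J_{\Phi_2}\|_1} \ge \Omega(\eps)$ for independent Haar $U_1, U_2$. The natural route is to lower-bound the trace norm by a \emph{witness}: $\|J_{\Phi_1} - J_{\Phi_2}\|_1 \ge \Tr[W(J_{\Phi_1} - J_{\Phi_2})]$ for any $W$ with $\|W\|_{\mathrm{op}} \le 1$, or more robustly $\|X\|_1 \ge \|X\|_2^2 / \|X\|_{\mathrm{op}}$ combined with a second-moment computation. I would lean toward the latter: compute $\ex{\|J_{\Phi_1} - J_{\Phi_2}\|_2^2}$ exactly using Haar integration (Weingarten calculus / the fact that $\ex{U \cdot U^\dagger \otimes U \cdot U^\dagger}$ is a known combination of permutation operators on $\mathrm{U}(rd_B)$), and separately bound $\|J_{\Phi_1} - J_{\Phi_2}\|_{\mathrm{op}} \le 2\max_i \|J_{\Phi_i}\|_{\mathrm{op}}$, which for these ``generic'' channels should be $O(\eps^2/d_A)$ or so. The $\ell_2$-moment should come out to order $\eps^2/d_A$ (the two random channels $\Phi_1,\Phi_2$ are nearly orthogonal as Choi states, up to the shared $\widetilde V_0$ component), and dividing by the operator-norm bound recovers $\Omega(\eps)$. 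The key obstacle — and where I expect to spend the most effort — is the Weingarten computation: one must carefully separate the deterministic part (involving $\widetilde V_0$ and, in the first construction, the structured diagonal $O$) from the fluctuating part, verify that the leading contributions do not cancel, and confirm that the operator-norm denominator is genuinely $O(\eps^2/d_A)$ rather than something larger, which requires controlling the top eigenvalue of a random Choi state (a matrix-concentration or moment-method argument). An alternative that sidesteps part of this is to pick an explicit test operator $W$ adapted to the $\widetilde V_0$-component and show $\ex{\Tr[W(J_{\Phi_1}-J_{\Phi_2})]}$ already has the right sign and magnitude $\Omega(\eps)$ after Haar-averaging — this trades the operator-norm control for a cleverer choice of witness, and may be the cleaner path in the $d_A = rd_B$ case where $O$ has a very rigid structure.
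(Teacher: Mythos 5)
Your Part~1 (the Lipschitz bound) is essentially the paper's own argument: peel off the partial trace, use the triangle inequality on $V\rho V^\dagger-V'\rho V'^\dagger$, and extract the $1/\sqrt{d_A}$ from the maximally entangled state via $\|X\ket{\Psi}\|^2=\Tr[X^\dagger X]/d_A$; this part is fine.

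Part~2 contains genuine gaps. First, your quantitative claims are off: the second moment is $\ex{\|J_{\Phi_1}-J_{\Phi_2}\|_2^2}=\Theta(\eps^2/r)$, not $\Theta(\eps^2/d_A)$ (the paper computes $\ex{\Tr[|C|^2]}=2/r$ for the cross term $C=\ptr{E}{\widetilde{V}_0\Psi(\widetilde{V}_1^\dagger-\widetilde{V}_2^\dagger)}$, and similarly $\ex{\Tr[|D|^2]}\ge \frac{2}{r}\Tr[\bar{O}^\dagger\bar{O}]/d_A=\Theta(\eps^2/r)$ when $d_A=rd_B$). More seriously, the bound $\|J_{\Phi_1}-J_{\Phi_2}\|_{\mathrm{op}}\le 2\max_i\|J_{\Phi_i}\|_{\mathrm{op}}=O(\eps^2/d_A)$ cannot hold: each $J_{\Phi_i}$ is a trace-one state of rank at most $r$, so $\|J_{\Phi_i}\|_{\mathrm{op}}\ge 1/r\gg\eps^2/d_A$; in the regime $d_A\le rd_B/2$ its top eigenvalue is dominated by the common $(1-\eps^2)$ block built from $\widetilde{V}_0$, of order $1/r$. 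The operator norm of the \emph{difference} is small only because that shared block cancels, and even then it is of order $\eps/r$, not $\eps^2/d_A$; establishing such a bound with high probability is itself a nontrivial random-matrix concentration step that you only gesture at, and a naive matrix-Bernstein bound would cost a log factor that would weaken the main theorem. Note also that with the numbers you state, the ratio $\|X\|_2^2/\|X\|_{\mathrm{op}}$ comes out $\Theta(1)$, not $\Omega(\eps)$, so the plan as written does not yield the claimed conclusion; and passing from a pointwise ratio bound to an expectation requires additional conditioning you do not address. Your fallback of a fixed witness $W$ ``adapted to the $\widetilde{V}_0$-component'' fails identically: $J_{\Phi_1}$ and $J_{\Phi_2}$ are i.i.d., so $\ex{\Tr[W(J_{\Phi_1}-J_{\Phi_2})]}=0$ for every deterministic $W$.

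The missing idea is the device the paper uses to convert moment information into a first-moment lower bound with no operator-norm control at all: isolate the cross term by a reverse triangle inequality, $\|J_{\Phi_1}-J_{\Phi_2}\|_1\ge 2\eps\sqrt{1-\eps^2}\,\|C\|_1-2\eps^2$, compute $\ex{\Tr[|C|^2]}=2/r$ and $\ex{\Tr[|C|^4]}\le 128/r^3$ exactly via Weingarten calculus (using the orthogonality $|\Tr[K_{0,i}^\dagger K_{0,j}]|\le \frac{2d_A}{r}\delta_{i,j}$ of the reference Kraus operators), and apply H\"older in the form $(\ex{\Tr[|C|^2]})^3\le \ex{\Tr[|C|^4]}\,(\ex{\Tr[|C|]})^2$ to conclude $\ex{\|C\|_1}\ge 1/4$; the case $d_A=rd_B$ is handled analogously with $D=A_1+A_1^\dagger-A_2-A_2^\dagger$, $A_i=\ptr{E}{U_i\bar{O}U_i^\dagger\Psi}$. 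Your second-moment/operator-norm route could in principle be repaired (correct scaling $\Theta(\eps^2/r)$, an $O(\eps/r)$ operator-norm bound on the difference proved by concentration without logs, plus care with the expectation of the ratio), but as proposed it rests on incorrect estimates and omits the step that actually closes the argument.
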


\subsection{Proof that $f$ is Lipschitz in the case $d_A=rd_B$}
Let $\rho = \proj{\Psi}_{A'A}$.  We have that using $\|O\|_{\rm{op}} = \eps$,

%Since $V_1 = U_1(\id+O)U_1^\dagger$, we have  $\ptr{E}{V_1 \rho V_1^\dagger} = \ptr{E}{\rho} + \ptr{E}{U_1 OU_1^\dagger \rho } + \ptr{E}{\rho U_1 O^\dagger U_1^\dagger} + \ptr{E}{U_1 OU_1^\dagger \rho U_1 O^\dagger U_1^\dagger}$.  

\bb
\left\|\ptr{E}{(V_1 - V_1') \rho V_1^\dagger}  \right\|_{1} &\le \left\|{(V_1 - V_1') \rho V_1^\dagger}  \right\|_{1}
\\& = \left\|{( U_1 O U_1^\dagger - U_1' O U_1'^\dagger) \rho }  \right\|_{1}
\\&\le \left\|{( U_1-U_1') O U_1^\dagger \rho }  \right\|_{1} + \left\|{U_1'O ( U_1-U_1') ^\dagger \rho }  \right\|_{1}
\\&\le 2\eps \sqrt{\frac{1}{d_A}} \|{U}_1-{U}_1'\|_2.
%\\&\le   \eps\sqrt{\frac{2}{d_A}} \sqrt{\|{U}_1-{U}_1'\|_2^2+ \|{U}_2-{U}_2'\|_2^2}
 %    \\&=  \eps\sqrt{\frac{2}{d_A}} \|(U_1,U_2) -(U_1',U_2')\|_2,
\ee
Hence by the triangle inequality 
\bb
 \left\|\ptr{E}{V_1 \rho V_1^\dagger} -\ptr{E}{V_1' \rho V_1'^\dagger} \right\|_{1}
 &\le  \left\|\ptr{E}{(V_1 -V_1') \rho V_1^\dagger}\right\|_{1} +  \left\|\ptr{E}{V_1' \rho (V_1-V_1')^\dagger} \right\|_{1}
 \\&\le  4\eps \sqrt{\frac{1}{d_A}} \|{U}_1-{U}_1'\|_2.
\ee
Finally, again by the triangle inequality  we have that 
\bb
&|f(U_1, U_2) - f(U_1', U_2')| 
\\&= \left|\left\|\ptr{E}{V_1 \rho V_1^\dagger} -\ptr{E}{V_2 \rho V_2^\dagger} \right\|_{1} -\left\|\ptr{E}{V_1' \rho V_1'^\dagger} -\ptr{E}{V_2' \rho V_2'^\dagger} \right\|_{1} \right|
\\&\le  \left|\left\|\ptr{E}{V_1 \rho V_1^\dagger} -\ptr{E}{V_2 \rho V_2^\dagger} \right\|_{1} -\left\|\ptr{E}{V_1' \rho V_1'^\dagger} -\ptr{E}{V_2' \rho V_2'^\dagger} \right\|_{1} \right|
\\&\le   \left\|\ptr{E}{V_1 \rho V_1^\dagger} -\ptr{E}{V_1' \rho V_1'^\dagger} \right\|_{1} +\left\|\ptr{E}{V_2 \rho V_2^\dagger} -\ptr{E}{V_2' \rho V_2'^\dagger} \right\|_{1}
\\&\le  4\eps \sqrt{\frac{1}{d_A}} \|{U}_1-{U}_1'\|_2 +  4\eps \sqrt{\frac{1}{d_A}} \|{U}_2-{U}_2'\|_2
\\&\le  4 \eps\sqrt{\frac{2}{d_A}} \sqrt{\|{U}_1-{U}_1'\|_2^2+ \|{U}_2-{U}_2'\|_2^2},
\ee 
where we use $a+b \le \sqrt{2}\sqrt{a^2+b^2}$ in the last inequality.

\subsection{Proof that $f$ is Lipschitz in the case $d_A\le rd_B/2$}
We have that 
\bb
\left\|\ptr{E}{(V_1 - V_1') \rho V_1^\dagger}  \right\|_{1} &\le \left\|{(V_1 - V_1') \rho V_1^\dagger}  \right\|_{1}
\\& = \eps \left\|{( \ket{1}_F\otimes {U}_1 - \ket{1}_F\otimes {U}_1') S \rho }  \right\|_{1}
\\&\le \eps \sqrt{\bra{\Psi} S^\dagger ( \ket{1}_F\otimes {U}_1 - \ket{1}_F\otimes {U}_1')^\dagger ( \ket{1}_F\otimes {U}_1 - \ket{1}_F\otimes {U}_1') S \ket{\Psi}}
\\&= \eps \sqrt{\Tr[S^\dagger ( {U}_1 - {U}_1')^\dagger ( {U}_1 - {U}_1') S ]}
\\&\le \eps \sqrt{\frac{1}{d_A}} \|{U}_1-{U}_1'\|_2. 
%\\&\le   \eps\sqrt{\frac{2}{d_A}} \sqrt{\|{U}_1-{U}_1'\|_2^2+ \|{U}_2-{U}_2'\|_2^2}
 %    \\&=  \eps\sqrt{\frac{2}{d_A}} \|(U_1,U_2) -(U_1',U_2')\|_2,
\ee
We conclude the proof in this case following similar steps as in the previous case.

\subsection{Lower bound on the expected first moment in the case $d_A\le rd_B/2$}

We have that 
\bb
   \left\|J_{{\Phi}_1} -J_{{\Phi}_2}\right\|_{1}&=   \left\|{\Phi}_1(\Psi) -{\Phi}_2(\Psi)\right\|_1\\
    &=  \left\|\ptr{E}{{V}_1(\Psi){V}_1^\dagger} -\ptr{E}{{V}_2(\Psi){V}_2^\dagger}\right\|_1
    \\&= \bigg\|\eps^2 \proj{1} \otimes \left(\ptr{E}{\widetilde{V}_1\Psi \widetilde{V}_1^\dagger}-\ptr{E}{\widetilde{V}_2\Psi \widetilde{V}_2^\dagger}\right)\\ 
    &\qquad + \eps\sqrt{1-\eps^2} \ket{0}\bra{1}\otimes \ptr{E}{\widetilde{V}_0\Psi ( \widetilde{V}_1^\dagger - \widetilde{V}_2^\dagger)}  \\
    &\qquad + \eps\sqrt{1-\eps^2} \ket{1}\bra{0}\otimes \ptr{E}{( \widetilde{V}_1 - \widetilde{V}_2)\Psi \widetilde{V}_0^\dagger}  \bigg\|_1
    \\&\geqt{(i)}   \eps\sqrt{1-\eps^2} \left\|\ket{0}\bra{1}\otimes \ptr{E}{\widetilde{V}_0\Psi (\widetilde{V}_1^\dagger - \widetilde{V}_2^\dagger)}  + \ket{1}\bra{0}\otimes \ptr{E}{( \widetilde{V}_1 - \widetilde{V}_2)\Psi \widetilde{V}_0^\dagger}  \right\|_1\\
    &\quad-  \eps^2\left\| \proj{1} \otimes (\widetilde{V}_1\Psi \widetilde{V}_1^\dagger-\widetilde{V}_2\Psi \widetilde{V}_2^\dagger)   \right\|_1
      \\&\eqt{(ii)}   2\eps\sqrt{1-\eps^2} \left\| \ptr{E}{\widetilde{V}_0\Psi ( \widetilde{V}_1^\dagger - \widetilde{V}_2^\dagger)}  \right\|_1 - 2 \eps^2
     % \\&\geqt{(iii)}  0.1\eps\sqrt{1-\eps^2} - 2\eps^2 
      %\\&\geqt{(iv)}  0.07\eps,
\ee
where in (i) we have used the reverse triangle inequality and the bound
\bb\label{eq:DPI}
    \big\|\ptr{E}{X_{FBE}}\big\|_1\leq \big\|X_{FBE}\big\|_1,
\ee
which holds for every operator $X_{FBE}$ and follows from the data-processing inequality for the trace norm; in (ii) we have noticed that
\bb
    &\big\|\ket{0}\bra{1}\otimes X_{BE}+\ket{1}\bra{0}\otimes X_{BE}^\dagger\big\|_1\\
    &\quad =\Tr\sqrt{\left(\ket{0}\bra{1}\otimes X_{BE}+\ket{1}\bra{0}\otimes X_{BE}^\dagger\right)^\dagger \left(\ket{0}\bra{1}\otimes X_{BE}+\ket{1}\bra{0}\otimes X_{BE}^\dagger\right)}\\
    &\quad =\Tr\sqrt{\ketbra{1}\otimes X_{BE}^\dagger X_{BE}+\ketbra{0}\otimes X_{BE} X_{BE}^\dagger}\\
    &\quad = \Tr\sqrt{X_{BE}^\dagger X_{BE}}+\Tr\sqrt{X_{BE} X_{BE}^\dagger}\\
    &\quad =2\|X_{BE}\|_1;
\ee
and  we have upper bounded $\|\widetilde{V}_1\Psi \widetilde{V}_1^\dagger-\widetilde{V}_2\Psi \widetilde{V}_2^\dagger\|_1\leq \|\widetilde{V}_1\Psi \widetilde{V}_1^\dagger\|_1+\|\widetilde{V}_2\Psi \widetilde{V}_2^\dagger\|_1=2$. 

Let us define the operator $C \coloneqq  \ptr{E}{\widetilde{V}_0\Psi_{A'A} ( \widetilde{V}_1^\dagger - \widetilde{V}_2^\dagger)}$. We want to prove that
    \begin{align}
        \mathbb{E}\Tr\big[|C|^2\big]&= \frac{2}{r}.
    \end{align}
   % when we sample independent random unitaries $U_1,U_2 \sim \Haar({\rm U}(rd_B))$.
 Let $\{\ket{i}_{E}\}_{i\in[r]}$ be an orthonormal basis for $E$. For $i=1,\dots, r$, let $K_{0, i}^{A\to B} \coloneqq \bra{i}_E \widetilde{V}_0^{A\to BE} $ and $K_{1, i}^{A\to B} \coloneqq \bra{i}_E \widetilde{V}_1^{A\to BE}$ be the Kraus operators obtained from the isometries $\widetilde{V}_0$ and $\widetilde{V}_1$, respectively. Writing the trace on the system $E$ in terms of the basis $\{\ket{i}_E\}_{i\in[r]}$, we get 
\bb\label{eq:second}
   \mathbb{E}\Tr\big[|C|^2\big]&= \mathbb{E}\Tr\left[\sum_{i,j=1}^r K_{0, i} \Psi_{A'A} ( K_{1, i} - K_{2, i})^\dagger ( K_{1, j} - K_{2, j}) \Psi_{A'A} K_{0, j}^\dagger \right]\\
   &\eqt{(iii)}  \sum_{i,j=1}^r \Tr\left[ K_{0, i} \Psi_{A'A} \left(\frac{2\id}{r}\delta_{i,j}\right) \Psi_{A'A} K_{0, j}^\dagger \right]
    \\&=  \frac{2}{r}\sum_{i=1}^r \Tr\left[ K_{0, i} \Psi_{A'A}^2 K_{0, i}^\dagger \right]\\
    &=  \frac{2}{r} \Tr\left[  \Psi_{A'A} \sum_{i=1}^r K_{0, i}^\dagger K_{0, i} \right]\\
    %&=  \frac{2}{r} \Tr\left[  \Psi_{A'A} \right]\\
    &=\frac{2}{r},
\ee
where in (iii) we have expanded
\bb
    \ex{(K_{1, i} - K_{2, i})^\dagger ( K_{1, j} - K_{2, j})}&=
    \ex{K_{1, i}^\dagger K_{1, j}}+\ex{K_{2, i}^\dagger K_{2, j}}\\
    &\quad -\ex{K_{1, i}^\dagger K_{2, j}}-\ex{K_{2, i}^\dagger K_{1, j}}
\ee
and, for $z_1,z_2\in\{1, 2\}$, we have computed
\bb\label{eq:54}
    \ex{K_{{z_1}, i}^\dagger K_{{z_2}, j}} &= \ex{ \widetilde{V}_{z_1}^\dagger  \ket{i}_E\bra{j}_E \widetilde{V}_{z_2}  }
    = S^\dagger\ex{   U_{z_1}^\dagger \ket{i}_E\bra{j}_E U_{z_2}  }S
    \\&=  S^\dagger \left(\frac{\delta_{{z_1},{z_2}}}{rd_B}  \Tr\left[\ket{i}_E\bra{j}_E\otimes \id_{B}\right]\right)S
    = \frac{\delta_{{z_1},{z_2}}\delta_{i,j}}{r} \id_{A},
\ee
leveraging the fact that $\EE{U\in{\rm U}(d)}[U]=0$, $\EE{U\in{\rm U}(d)}[U^\dagger X U]=\frac{\Tr[X]}{d}\id $ and $S^\dagger S=\id_A$. \\

The only inequality we are left to prove is
\bb
   \mathbb{E}\Tr\big[|C|^4\big] \le  \frac{128}{r^3}.
\ee
We have
\bb\label{eq:ABCD-initial}
   \mathbb{E}\Tr\big[|C|^4\big]&= \mathbb{E}\Tr\Bigg[ \sum_{i,j=1}^r K_{0, i} \Psi_{A'A} ( K_{1, i} - K_{2, i})^\dagger ( K_{1, j} - K_{2, j}) \Psi_{A'A} K_{0, j}^\dagger  \\[-1em]
   &\qquad \qquad \qquad\qquad  \times \sum_{k,l=1}^r K_{1, k}\Psi_{A'A} ( K_{1, k} - K_{2, k})^\dagger ( K_{1, l} - K_{2, l}) \Psi_{A'A} K_{0, l}^\dagger\Bigg]
   \\&= \mathbb{E} \sum_{i,j=1}^r \sum_{k,l=1}^r \bra{\Psi_{A'A}}K_{0, l}^\dagger K_{0, i}\ket{\Psi_{A'A}}\bra{\Psi_{A'A}}K_{0, j}^\dagger K_{0, k} \ket{\Psi_{A'A}}     \\[-1.2em]
   &\phantom{\mathbb{E} \sum_{i,j=1}^r \sum_{k,l=1}^r }\quad \times  \bra{\Psi_{A'A}} ( K_{1, i} - K_{2, i})^\dagger ( K_{1, j} - K_{2, j}) \ket{\Psi_{A'A}}  \\[-1.2em]
   &\phantom{\mathbb{E} \sum_{i,j=1}^r \sum_{k,l=1}^r }\quad \times \bra{\Psi_{A'A}} ( K_{1, k} - K_{2, k})^\dagger ( K_{1, l} - K_{2, l}) \ket{\Psi_{A'A}}
    \\&\leqt{(iv)} \mathbb{E} \sum_{i,j=1}^r  \frac{4}{r^2} \left|\bra{\Psi_{A'A}} ( K_{1, i} - K_{2, i})^\dagger ( K_{1, j} - K_{2, j}) \ket{\Psi_{A'A}} \right|^2
\ee
where in (iv) we have noticed that, by \eqref{eq:Kraus-1},
\bb
    &\big|\bra{\Psi_{A'A}}K_{0, l}^\dagger K_{0, i}\ket{\Psi_{A'A}}\bra{\Psi_{A'A}}K_{0, j}^\dagger K_{0, k} \ket{\Psi_{A'A}}\big|
    \\&\qquad \qquad = \left|\frac{1}{d_A} \Tr[K_{0, l}^\dagger K_{0, i}]\frac{1}{d_A} \Tr[K_{0, j}^\dagger K_{0, k} ]\right| \leq \frac{4}{r^2} \delta_{i,l}\delta_{j,k}.
\ee
Hence 
\bb\label{eq:ABCD}
   \mathbb{E}\Tr\big[|C|^4\big]&\le\mathbb{E} \sum_{i,j=1}^r  \frac{4}{r^2} \left|\bra{\Psi_{A'A}} ( K_{1, i} - K_{2, i})^\dagger ( K_{1, j} - K_{2, j}) \ket{\Psi_{A'A}} \right|^2
    \\&\leqt{(v)} \mathbb{E} \sum_{i,j=1}^r  \frac{16}{r^2} \Big(\left|\bra{\Psi_{A'A}} K_{1, i} ^\dagger  K_{1, j} \ket{\Psi_{A'A}} \right|^2 +\left|\bra{\Psi_{A'A}} K_{2, i} ^\dagger  K_{2, j} \ket{\Psi_{A'A}} \right|^2 \\[-1em]
    & \phantom{\le \mathbb{E} \sum_{i,j=1}^r  \frac{16}{r^2}}\quad +2 \left|\bra{\Psi_{A'A}} K_{1, i} ^\dagger  K_{2, j} \ket{\Psi_{A'A}} \right|^2\Big)
    \\&\leqt{(vi)}  \mathbb{E} \sum_{i,j=1}^r  \frac{32}{r^2} \left(\left|\bra{\Psi_{A'A}} K_{1, i} ^\dagger  K_{1, j} \ket{\Psi_{A'A}} \right|^2 +\left|\bra{\Psi_{A'A}} K_{2, i} ^\dagger  K_{2, j} \ket{\Psi_{A'A}} \right|^2 \right)
     \\& =   \frac{64}{r^2} \sum_{i,j=1}^r  \mathbb{E} \left|\bra{\Psi_{A'A}} K_{1, i} ^\dagger  K_{1, j} \ket{\Psi_{A'A}} \right|^2,
\ee
where in (v) and in (vi) we have leveraged the inequality $2ab\leq a^2+b^2$ multiple times.

Recalling that we defined $\widetilde{V}_1 = U_1S$ and $K_{1, i} = \bra{i}_E \widetilde{V}_1$ , we compute
\bb\label{eq:73}
&\mathbb{E} \left|\bra{\Psi_{A'A}} K_{1, i} ^\dagger  K_{1, j} \ket{\Psi_{A'A}} \right|^2\\
&\quad = \frac{1}{d_A^2}\ex{\Tr[ K_{1, i} ^\dagger  K_{1, j} ]\Tr[ K_{x,j} ^\dagger  K_{1, i} ]}
\\&\quad = \frac{1}{d_A^2} \sum_{k,l =1}^{d_A} \mathbb{E}\Tr\left[  K_{1, i} ^\dagger K_{1, j}  \ket{l}\bra{k}  K_{1, j}^\dagger  K_{1, i}  \ket{k}\bra{l} \right]
\\&\quad \eqt{(vii)} \frac{1}{d_A^2} \sum_{k,l =1}^{d_A} \mathbb{E}\Tr\left[   U_1^\dagger  \big(\ket{i}_E\bra{j}_E\otimes \id_B\big) U_1 S  \ket{l}\bra{k}   S^\dagger U_1^\dagger  \big(\ket{j}_E\bra{i}_E\otimes \id_B \big)U_1 S \ket{k}\bra{l}  S^\dagger  \right]
\\&\quad \eqt{(viii)}\frac{1}{d_A^2} \sum_{k,l =1}^{d_A}  \sum_{\alpha, \beta \in S_2} \operatorname{Wg}(\beta \alpha, d_Br) \ptr{ \beta}{ S  \ket{k}\bra{l}  S^\dagger, S \ket{l}\bra{k}   S^\dagger}\\[-1em]
    &\quad \phantom{ \eqt{(xii)} \sum_{\alpha, \beta \in S_2} \operatorname{Wg}(\beta \alpha, d_Br)}\quad \times \ptr{ \alpha (12)}{\ket{i}_E\bra{j}_E\otimes \id_B, \ket{j}_E\bra{i}_E\otimes \id_B} \\
    &\quad = \frac{1}{d_A^2} \sum_{k,l =1}^{d_A}\Big( \operatorname{Wg}((1)(2), d_Br) \big(\delta_{k,l}d_B+\delta_{i,j}d_B^2\big)+\operatorname{Wg}((12), d_Br) \big(\delta_{k,l}\delta_{i,j}d_B^2+d_B\big)\Big)\\
    &\quad\eqt{(ix)} \frac{1}{d_A^2}\cdot  \frac{1}{(d_Br)^2-1}\sum_{k,l =1}^{d_A} \Big(\delta_{k,l}d_B+\delta_{i,j}d_B^2-\frac{1}{d_Br} \big(\delta_{k,l}\delta_{i,j}d_B^2+d_B\big)\Big)\\
    %&\quad= \frac{1}{d_A^2} \frac{1}{(d_Br)^2-1} \Big(d_Ad_B+\delta_{i,j}d_A^2d_B^2+\frac{1}{d_Br} \big(\delta_{i,j}d_Ad_B^2+d_A^2d_B\big)\Big)\\
    &\quad =  \frac{1}{d_A} \cdot \frac{1}{(rd_B)^2-1}\Big(d_B+\delta_{i,j}d_Ad_B^2-\frac{1}{d_Br} \big(\delta_{i,j}d_B^2+d_Ad_B\big)\Big)
\ee
where in (vii) have expanded $K_{x,i}=\bra{i}_EU_1S$ and we have leveraged the ciclicity of the trace; in (viii) we have used  Lemma \ref{lem:Wg} with $A_1=\ket{i}_E\bra{j}_E\otimes \id_B$, $A_2=\ket{j}_E\bra{i}_E\otimes \id_B$, $B_1=S\ket{k}\bra{l}S^\dagger$ and $B_2=S\ket{l}\bra{k}S^\dagger$; in (ix) we have used the values given in Lemma \ref{lem:wg2}. %; in (xv) we used the inequality $\frac{1}{x-1}\le \frac{2}{x}$ for $x\ge 2$. 
Combining \eqref{eq:ABCD} with \eqref{eq:73}, we get

\bb\label{eq:fourth}
    \mathbb{E}\Tr\big[|C|^4\big]&\leq \frac{64}{r^2} \sum_{i,j=1}^r  \mathbb{E} \left|\bra{\Psi_{A'A}} K_{1, i} ^\dagger  K_{1, j} \ket{\Psi_{A'A}} \right|^2\\
    &\leq \frac{64}{r^2}\cdot  \frac{1}{d_A}\cdot\frac{1}{(rd_B)^2-1} \sum_{i,j=1}^r \Big(d_B+\delta_{i,j}d_Ad_B^2 -\frac{1}{d_Br} \big(\delta_{i,j}d_B^2+d_Ad_B\big)\Big)\\
    %&=\frac{64}{r^2} \frac{1}{d_A} \frac{1}{(rd_B)^2}\big( r^2d_B+d_Ad_Br^2+ d_B+d_Ar\big)\\
    &=\frac{64}{r^2}\cdot \frac{1}{d_Ad_B}\cdot\frac{1}{(rd_B)^2-1}\left( d_B^2r^2 + d_Ad_Br^3 - d_B^2 - d_Ad_Br \right)\\
     &=\frac{64}{r^2}\cdot \left( \frac{1}{d_Ad_B} + \frac{1}{r}  + \frac{1-d_B^2+ d_Ad_B-d_Ad_Br}{d_Ad_B(r^2d_B^2-1)}\right)\leq \frac{128}{r^3}, 
\ee
where in the last line we have recalled that $r\leq d_Ad_B$.

 By Hölder's inequality applied to  $\mathbb{E}\Tr[\,\cdot\,]$ with conjugate exponents $3$ and $3/2$, we get

\bb
    \mathbb{E}[\Tr\big[|C|^2\big]]=\mathbb{E}[\Tr\big[|C|^{4/3} |C|^{2/3}\big]]
    \le \left( \mathbb{E}\left[ \Tr\big[|C|^4\big]\right] \right)^{1/3} \left(\mathbb{E}\left[\Tr\big[|C|\big]\right]\right)^{2/3},
\ee
which yields
\bb\label{eq:Hölder}
    \big(\mathbb{E}[\Tr\big[|C|^2\big]]\big)^3
    \le   \mathbb{E}\left[ \Tr\big[|C|^4\big]\right] \left(\mathbb{E}\left[\Tr\big[|C|\big]\right]\right)^{2}.
\ee
Whence, by  \eqref{eq:second} and \eqref{eq:fourth} combined with \eqref{eq:Hölder}, we have
\begin{align}\label{eq:LB-1st moment}
    \big(\mathbb{E}\Tr\big[|C|\big]\big)^2 \ge \frac{\big(\mathbb{E}\Tr\big[|C|^2\big]\big)^3}{\mathbb{E}\Tr\big[|C|^4\big]}\ge \frac{(\frac{2}{r})^3}{\frac{128}{r^3}} = \frac{1}{16}. 
\end{align}
Therefore 
\bb
   \ex{\left\|J_{{\Phi}_1} -J_{{\Phi}_2}\right\|_{1}}& \ge    2\eps\sqrt{1-\eps^2} \ex{ \left\| \ptr{E}{\widetilde{V}_0\Psi ( \widetilde{V}_1^\dagger - \widetilde{V}_2^\dagger)}  \right\|_1} - 2 \eps^2
     \\&\ge  0.5\eps\sqrt{1-\eps^2} - 2\eps^2 
     \\&\ge\Omega(\eps). 
\ee

\subsection{Lower bound on the expected first moment in the case $d_A = rd_B$}
Recall the definition of the $d_A\times d_A$ matrix
\bb
    O\coloneqq\begin{cases}
        \eps\cdot {\rm diag}( \mathrm{e}^{i\theta}, \mathrm{e}^{-i\theta},\dots,  \mathrm{e}^{i\theta}, \mathrm{e}^{-i\theta}) & \text{if $d_A$ is even,}\\
        \eps\cdot {\rm diag}( \mathrm{e}^{i\theta}, \mathrm{e}^{-i\theta},\dots,  \mathrm{e}^{i\theta}, \mathrm{e}^{-i\theta},0) & \text{if $d_A$ is odd.}
    \end{cases}
\ee
where $\theta\in(\pi/2,\pi]$ satisfying $\epsilon =-2\cos\theta$. 
A  corresponding traceless matrix $\bar O$ is
\bb
    \bar{O}&\coloneqq O - \Tr[O]\frac{\id}{d_A} =\begin{cases}
        \epsilon\cdot {\rm diag}( \mathrm{e}^{i\theta}, \mathrm{e}^{-i\theta},\dots,  \mathrm{e}^{i\theta}, \mathrm{e}^{-i\theta})+\frac{\epsilon^2}{2}\id_A & \text{if $d_A$ is even,}\\
        \epsilon\cdot {\rm diag}( \mathrm{e}^{i\theta}, \mathrm{e}^{-i\theta},\dots,  \mathrm{e}^{i\theta}, \mathrm{e}^{-i\theta},0)+\frac{\epsilon^2}{2}\frac{d_A-1}{d_A}\id_A & \text{if $d_A$ is odd,}
    \end{cases}
\ee
where we have noticed that $\Tr \bar{O} =\epsilon \,\lfloor d_A/2\rfloor\,2\cos\theta =-\lfloor d_A/2\rfloor\epsilon^2$. This construction ensures the following properties, which will be used later:
\begin{enumerate}[label=(\alph*)]
    \item $(O+O^\dagger + O^\dagger O)_j= \epsilon \big((\mathrm{e}^{i\theta}+\mathrm{e}^{-i\theta})-2\cos\theta\big) =0 $ for all $j=1, \dots, d_A$,
    \item $\|\bar{O}\|_{\rm op}=\Theta(\eps )$,
    \item $\Tr[\bar{O} \bar{O}^\dagger]= d_A(\epsilon^2+\Theta(\eps^4))$,
    \item $\Tr[\bar{O}\bar{O}^\dagger\bar{O}\bar{O}^\dagger] = d_A (\eps^4 + \Theta(\eps^6))$.
\end{enumerate}
 Let $U\in{\rm U}(d_Br)$, we can define the Stinespring isometry
\bb\label{eq:Stinespring}
V \coloneqq U(\id + O)U^\dagger.%= U\left(\sum_{i=0}^{d_Br-1}(1+\eta_i) \proj{i}\right)U^\dagger=\id +  \sum_{i=0}^{d_Br-1}\eta_i U\proj{i}U^\dagger
\ee
Note that $V$ is indeed an isometry, since, using property (a), we have
\bb
    V^\dagger V&= U(\id + O^\dagger)(\id + O)U^\dagger=\id +U(O+O^\dagger+O^\dagger O)U^\dagger=\id.\\
\ee
Furthermore, if we consider two unitaries $U_1$ and $U_2$, and we define the corresponding Stinespring isometries $V_1$ and $V_2$ as in \eqref{eq:Stinespring},  and let 
\bb
A_1 &\coloneqq \ptr{E}{U_1\bar{O}U_1^\dagger \Psi},\qquad
D(U_1,U_2)\coloneqq A_1 + A_1^\dagger -A_2 -A_2^\dagger.
\ee
We have that 
\bb
   \left\|J_{{\Phi}_1} -J_{{\Phi}_1}\right\|_{1}&=   \left\|{\Phi}_1(\Psi) -{\Phi}_2(\Psi)\right\|_1\\
    &\geq  \left\|
    \ptr{E}{U_1OU_1^\dagger \Psi + \Psi U_1O^\dagger U_1^\dagger -U_2OU_2^\dagger \Psi - \Psi U_2O^\dagger U_2^\dagger}
    \right\|_1\\
    &\quad -\left\|
    \ptr{E}{U_1OU_1^\dagger \Psi  U_1O^\dagger U_1^\dagger -U_2OU_2^\dagger \Psi  U_2O^\dagger U_2^\dagger}
    \right\|_1
    \\&\geq \| D(U_1,U_2)\|_1 -2\eps^2,
\ee
using that $ \left\|
    \ptr{E}{U_1OU_1^\dagger \Psi  U_1O^\dagger U_1^\dagger}\right\|_1 =\Tr  \left[
    {U_1OU_1^\dagger \Psi  U_1O^\dagger U_1^\dagger}\right] = \frac{1}{d_A} \Tr\left[OO^\dagger\right]\le \eps^2$.

Observe that
\bb
& \ptr{E}{U_1OU_1^\dagger \Psi + \Psi U_1O^\dagger U_1^\dagger -U_2OU_2^\dagger \Psi - \Psi U_2O^\dagger U_2^\dagger}
\\&= \ptr{E}{U_1\bar{O}U_1^\dagger \Psi + \Psi U_1\bar{O}^\dagger U_1^\dagger -U_2\bar{O}U_2^\dagger \Psi - \Psi U_2\bar{O}^\dagger U_2^\dagger}
\\&= D(U_1,U_2).
 \ee

  Now, we are interested in proving the bound
    \bb\label{eq:second-D}
    \ex{\Tr[|D|^2] } \geq \frac{2}{r}\cdot  \frac{\Tr\left[  \bar{O}^\dagger  \bar{O} \right]}{d_A}.
    \ee
    Since 
    \bb
    \ex{\Tr[|D|^2]} = 2\ex{\Tr[A_1A_1]+ 2\Tr[A_1^\dagger A_1] + \Tr[A_1^\dagger A_1^\dagger ] } -  2\Re \ex{\Tr[A_1A_2]+ \Tr[A_1^\dagger A_2]}.
    \ee
    Since $\Tr[\bar{O}] = 0$ we have that $\ex{U\bar{O}U^\dagger} = \Tr[\bar{O}] \frac{\dI}{d_A} = 0$, hence 
    \bb
    \ex{\Tr[A_1A_2]} =\ex{ \Tr[A_1^\dagger A_2]} = 0.
    \ee
    We have that 
    \bb
    \ex{\Tr[A_1A_1^\dagger ]} &= \ex{\Tr\left[\ptr{E}{U_1\bar{O}U_1^\dagger \Psi} \ptr{E}{\Psi U_1\bar{O}^\dagger U_1^\dagger}\right]} \\&=\sum_{i,j=1}^r  \ex{\Tr\left[{U_1\bar{O}U_1^\dagger \Psi} \ket{i}\bra{j} \otimes \id_B  {\Psi U_1\bar{O}^\dagger U_1^\dagger} \ket{j}\bra{i} \otimes \id_B\right]} 
    \\&=\sum_{i,j=1}^r  \bra{\Psi}  \ket{i}\bra{j} \otimes \id_B \ket{\Psi} \ex{\Tr\left[U_1\bar{O}U_1^\dagger \Psi U_1\bar{O}^\dagger U_1^\dagger \ket{j}\bra{i} \otimes \id_B\right]} 
    \\&=\sum_{i,j=1}^r \frac{\delta_{i,j} d_B}{d_A}\ex{\Tr\left[U_1\bar{O}U_1^\dagger \Psi U_1\bar{O}^\dagger U_1^\dagger \ket{j}\bra{i} \otimes \id_B\right]} 
    \\&=\sum_{i}^r \frac{ d_B}{d_A}\ex{\Tr\left[U_1\bar{O}U_1^\dagger \Psi U_1\bar{O}^\dagger U_1^\dagger \ket{i}\bra{i} \otimes \id_B\right]} 
    \\&= \frac{ d_B}{d_A}\ex{\Tr\left[U_1\bar{O}U_1^\dagger \Psi U_1\bar{O}^\dagger U_1^\dagger \right]} 
    \\&= \frac{ d_B}{d_A}\ex{\Tr\left[\bar{O}U_1^\dagger \Psi U_1 \bar{O}^\dagger  \right]} 
    \\&= \frac{ d_B}{d_A} \cdot \frac{1}{d_A} \Tr\left[ U_1 \bar{O}^\dagger  \bar{O}U_1^\dagger   \right]
    \\&= \frac{ 1}{r} \cdot \frac{1}{d_A} \Tr\left[  \bar{O}^\dagger  \bar{O} \right].
    \ee 
Using Lemma \ref{lem:Wg} and Lemma \ref{lem:wg2}, we can estimate  
\bb
    &\ex{\Tr[A_1A_1 ]} \\
    &\quad = \ex{\Tr\left[\ptr{E}{U_1\bar{O}U_1^\dagger \Psi} \ptr{E}{U_1\bar{O} U_1^\dagger \Psi }\right]} \\&\quad=\sum_{i,j=1}^r  \ex{\Tr\left[{U_1\bar{O}U_1^\dagger \Psi} \ket{i}\bra{j} \otimes \id_B  { U_1\bar{O} U_1^\dagger \Psi} \ket{j}\bra{i} \otimes \id_B\right]}
    \\&\quad= \sum_{i,j=1}^r \frac{1}{d_A^2} \ex{\Tr\left[\ket{i}\bra{j} \otimes \id_B   U_1\bar{O} U_1^\dagger\right]\Tr\left[\ket{j}\bra{i} \otimes \id_B U_1\bar{O}U_1^\dagger\right]}
    \\&\quad= \sum_{i,j=1}^r \frac{1}{d_A^2}  \sum_{a,b=1}^{d_A}\ex{\Tr\left[\ket{i}\bra{j} \otimes \id_B   U_1\bar{O} U_1^\dagger\ket{a}\bra{b}\ket{j}\bra{i} \otimes \id_B U_1\bar{O}U_1^\dagger\ket{b}\bra{a}\right]}
    \\&\quad=  \sum_{i,j=1}^r \frac{1}{d_A^2}  \sum_{a,b=1}^{d_A}  \frac{1}{d_A^2-1}\left( \Tr[\bar{O}\bar{O}^\dagger ] \Tr[\ket{a}\bra{b}\ket{j}\bra{i} \otimes \id_B] \Tr[\ket{b}\bra{a}\ket{i}\bra{j} \otimes \id_B] \right)
    \\&\quad\qquad -\sum_{i,j=1}^r \frac{1}{d_A^2}  \sum_{a,b=1}^{d_A}  \frac{1}{d_A(d_A^2-1)}\left( \Tr[\bar{O}\bar{O}^\dagger ] \Tr[\ket{a}\bra{b}\ket{j}\bra{i} \otimes \id_B\ket{b}\bra{a}\ket{j}\bra{i} \otimes \id_B] \right)
    \\&\quad=  \sum_{i,j=1}^r \frac{1}{d_A^2}   \frac{1}{d_A^2-1}\left( \Tr[\bar{O}\bar{O}^\dagger ] \Tr[\ket{j}\bra{i} \otimes \id_B \ket{i}\bra{j} \otimes \id_B\right)
    \\&\quad\qquad -\sum_{i,j=1}^r \frac{1}{d_A^2}  \sum_{a,b=1}^{d_A}  \frac{1}{d_A(d_A^2-1)}\left( \Tr[\bar{O}\bar{O}^\dagger ] \delta_{i,j}d_B^2 \right)
    \\&\quad= \frac{1}{d_A^2(d_A^2-1)}(d_B r^2 - r)\Tr[\bar{O}\bar{O}^\dagger ]
    \\&\quad= \frac{r}{d_A^2(d_A+1)}\Tr[\bar{O}\bar{O}^\dagger ].
\ee 
Therefore,
 \bb
    \ex{\Tr[|D|^2]} &= 2\ex{\Tr[A_1A_1]+ 2\Tr[A_1^\dagger A_1] + \Tr[A_1^\dagger A_1^\dagger ] } -  2\Re \ex{\Tr[A_1A_2]+ \Tr[A_1^\dagger A_2]}
    \\& = 2\frac{1}{r} \frac{\Tr\left[  \bar{O}^\dagger  \bar{O} \right]}{d_A} + 4\frac{r}{d_A^2(d_A+1)}\Tr[\bar{O}\bar{O}^\dagger ]
    \\&\ge \frac{2}{r} \frac{\Tr\left[  \bar{O}^\dagger  \bar{O} \right]}{d_A}.
    \ee
Finally, let us prove that
    \bb\label{eq:fourth-D}
    \ex{\Tr[|D|^4] } \le 4^4 \frac{2}{r^2d_A^3}\left( (\Tr[\bar{O}\bar{O}^\dagger])^2 d_B + \Tr[\bar{O}\bar{O}^\dagger\bar{O}\bar{O}^\dagger]r\right).
    \ee
    Recall that $D(U_1,U_2) = A_1 + A_1^\dagger -A_2 -A_2^\dagger$ so by the triangle inequality and Hölder inequality:
    \bb
     \ex{\Tr[|D|^4] } &= \ex{\|A_1 + A_1^\dagger -A_2 -A_2^\dagger\|_4^4}
     \\&\le \ex{ (\|A_1\|_4 + \|A_1^\dagger\|_4 + \|A_2\|_4 +\|A_2^\dagger\|_4)^4}
     \\&\le 4^3 \ex{ \|A_1\|_4^4 + \|A_1^\dagger\|_4^4 + \|A_2\|_4^4 +\|A_2^\dagger\|_4^4}
     \\&= 4^4 \ex{ \|A_1\|_4^4}.
    \ee
   Moreover,
   \bb
&\ex{ \|A_1\|_4^4} 
\\&= \ex{\Tr[A_1 A_1^\dagger A_1 A_1^\dagger ]}
\\&= \ex{\Tr\left[\ptr{E}{U_1\bar{O}U_1^\dagger \Psi} \ptr{E}{\Psi \bar{O}^\dagger  U_1^\dagger }\ptr{E}{U_1\bar{O}U_1^\dagger \Psi} \ptr{E}{\Psi U_1\bar{O}^\dagger  U_1^\dagger  }\right]} 
\\&=\sum_{i,j,k,l=1}^r  \mathbb{E}\Tr\Big[{U_1\bar{O}U_1^\dagger \Psi} (\ket{i}\bra{j} \otimes \id_B ) { \Psi U_1\bar{O}^\dagger  U_1^\dagger } (\ket{j}\bra{k} \otimes \id_B)\\[-1em]
&\phantom{=\sum_{i,j,k,l=1}^r  \mathbb{E}\Tr\Big[}\quad
\times  U_1\bar{O}U_1^\dagger \Psi (\ket{k}\bra{l} \otimes \id_B)  { \Psi U_1\bar{O}^\dagger  U_1^\dagger }( \ket{l}\bra{i} \otimes \id_B)\Big]
\\&=\sum_{i,j,k,l=1}^r  \frac{\delta_{i,j}d_B}{d_A}\cdot \frac{\delta_{k,l}d_B}{d_A}\mathbb{E}\Tr\Big[U_1\bar{O}U_1^\dagger  \Psi U_1\bar{O}^\dagger U_1^\dagger  (\ket{j}\bra{k} \otimes \id_B)\\[-1em]
&\phantom{=\sum_{i,j,k,l=1}^r  \frac{\delta_{i,j}d_B}{d_A}\cdot \frac{\delta_{k,l}d_B}{d_A}\mathbb{E}\Tr\Big[}
 \quad \times U_1\bar{O}U_1^\dagger     \Psi U_1\bar{O}^\dagger U_1^\dagger  (\ket{l}\bra{i} \otimes \id_B)\Big]
\\&=\sum_{i,k=1}^r  \frac{d_B^2}{d_A^2}\cdot \frac{1}{d_A^2} \ex{\Tr\left[U_1\bar{O}^\dagger U_1^\dagger (\ket{i}\bra{k} \otimes \id_B)U_1\bar{O}U_1^\dagger \right]\Tr\left[U_1\bar{O}^\dagger U_1^\dagger  (\ket{k}\bra{i} \otimes \id_B)U_1\bar{O}U_1^\dagger\right]}
\\&=\sum_{i,k=1}^r  \frac{d_B^2}{d_A^2}\cdot \frac{1}{d_A^2} \sum_{a,b=1}^{d_A} \ex{\Tr\left[U_1\bar{O}\bar{O}^\dagger U_1^\dagger  (\ket{i}\bra{k} \otimes \id_B) \ket{a}\bra{b} U_1\bar{O}\bar{O}^\dagger U_1^\dagger  (\ket{k}\bra{i} \otimes \id_B          \ket{b}\bra{a}\right]}
\\&=\sum_{i,k=1}^r  \frac{d_B^2}{d_A^4} \frac{1}{d_A^2-1}\left( (\Tr[\bar{O}\bar{O}^\dagger])^2 \delta_{k,i}d_B^2 + \Tr[\bar{O}\bar{O}^\dagger\bar{O}\bar{O}^\dagger] d_B\right) 
\\&\qquad -\frac{1}{d_A(d_A^2-1)}\left( (\Tr[\bar{O}\bar{O}^\dagger])^2 d_B + \Tr[\bar{O}\bar{O}^\dagger\bar{O}\bar{O}^\dagger] \delta_{k,i}d_B^2\right)
\\&\le  \frac{2}{r^2d_A^3}\left( (\Tr[\bar{O}\bar{O}^\dagger])^2 d_B + \Tr[\bar{O}\bar{O}^\dagger\bar{O}\bar{O}^\dagger]r\right).
   \ee  
By \eqref{eq:second-D}, \eqref{eq:fourth-D} and  Hölder's inequality we have
\bb
    \big(\mathbb{E}\Tr\big[|D|\big]\big)^2 &\ge \frac{\big(\mathbb{E}\Tr\big[|D|^2\big]\big)^3}{\mathbb{E}\Tr\big[|D|^4\big]}
    \\&\ge \frac{(\frac{2}{r} \frac{\Tr\left[  \bar{O}^\dagger  \bar{O} \right]}{d_A})^3}{4^4\frac{2}{r^2d_A^3}\left( (\Tr[\bar{O}\bar{O}^\dagger])^2 d_B + \Tr[\bar{O}\bar{O}^\dagger\bar{O}\bar{O}^\dagger]r\right) }  
     \\&\ge \frac{4}{4^4r} \cdot \frac{d_A\eps^2}{d_B + 2\frac{r}{d_A}}
     \\&\ge \frac{2}{4^4} \eps^2
\ee
where we use $\Tr[\bar{O} \bar{O}^\dagger]= d_A(\epsilon^2+O(\eps^3))=\Omega (d_A\eps^2)$.
    and  $\Tr[\bar{O}\bar{O}^\dagger\bar{O}\bar{O}^\dagger] = d_A (\eps^4 + \Theta(\eps^6))$.

Finally, 
\bb
   \left\|J_{{\Phi}_1} -J_{{\Phi}_1}\right\|_{1}&\ge  \| D(U_1,U_2)\|_1 -2\eps^2 \ge \Omega(\eps). 
\ee

\section{Weingarten Calculus} \label{sec:weingarten facts}
As we use a random channel constructed from sampling a $\Haar$-random unitary matrix in our lower bound proofs, we need some facts from Weingarten calculus in order to compute the corresponding expectation values with respect to the Haar measure.  
If $\pi\in S_n$ is a permutation of $[n]$, let $\operatorname{Wg}(\pi,d)$ denote the Weingarten function of dimension $d$. The following lemma is useful for our results. 
\begin{lemma}[{\cite{gu2013moments}}]\label{lem:Wg} Let $U$ be a $\Haar$-distributed unitary $(d\times d)$-matrix and let $\{A_i,B_i\}_{i=1}^n$ be a sequence of complex $(d\times d)$-matrices. We have the following formula for the expectation value:
\bb\label{eq:wg_lem}
&\ex{\Tr(UB_1U^\dagger A_1U\dots UB_nU^\dagger A_n)}
\\&\qquad =\sum_{\alpha,\beta \in S_n}\operatorname{Wg}(\beta\alpha^{-1},d)\Tr_{\beta^{-1}}(B_1,\dots,B_n)\Tr_{\alpha\gamma_n}(A_1,\dots,A_n),
\ee
where $\gamma_n=(12\dots n)$ and, writing $\sigma$ in terms of cycles $\{C_j\}$ as $\sigma=\prod_j C_j $,
\bb
\Tr_{\sigma}(M_1,\dots,M_n)\coloneqq\prod_j \Tr\prod_{i\in C_j} M_i.
\ee
\end{lemma}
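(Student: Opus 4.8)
The plan is to obtain Lemma~\ref{lem:Wg} from the elementary Weingarten integration formula for a single monomial in the entries of a Haar unitary: for $U\sim\Haar({\rm U}(d))$,
\[
\ex{\,U_{p_1q_1}\cdots U_{p_nq_n}\,\overline{U_{s_1r_1}}\cdots\overline{U_{s_nr_n}}\,}
=\sum_{\sigma,\tau\in S_n}\operatorname{Wg}(\tau\sigma^{-1},d)\prod_{k=1}^n\delta_{p_k,s_{\sigma(k)}}\,\delta_{q_k,r_{\tau(k)}}.
\]
The strategy is to expand the cyclic trace in \eqref{eq:wg_lem} in coordinates so that exactly $n$ factors of $U$ and $n$ factors of $\overline{U}$ appear, substitute this identity, and then contract all the Kronecker deltas; the resulting contractions split cleanly into an ``$A$-part'' and a ``$B$-part'' that each reassemble into a single product of traces.

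Concretely, first I would assign to the $k$-th block $UB_kU^\dagger A_k$ of the product the indices $U_{p_kq_k}$, $(B_k)_{q_kr_k}$, $(U^\dagger)_{r_ks_k}=\overline{U_{s_kr_k}}$ and $(A_k)_{s_kp_{k+1}}$, where the trace forces $p_{n+1}=p_1$. This writes the left-hand side of \eqref{eq:wg_lem} as $\sum_{\vec p,\vec q,\vec r,\vec s}\big(\prod_k U_{p_kq_k}\,\overline{U_{s_kr_k}}\big)\prod_k (B_k)_{q_kr_k}(A_k)_{s_kp_{k+1}}$, so the monomial formula applies verbatim with the pairing above, producing a double sum over $\sigma,\tau\in S_n$ with weight $\operatorname{Wg}(\tau\sigma^{-1},d)$ times a sum over $\vec p,\vec q,\vec r,\vec s$ of $\prod_k\delta_{p_k,s_{\sigma(k)}}\delta_{q_k,r_{\tau(k)}}(B_k)_{q_kr_k}(A_k)_{s_kp_{k+1}}$.

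Next I would carry out the two contractions. Summing over $\vec q$ identifies $q_k=r_{\tau(k)}$, so the $B$-factors become $\prod_k (B_k)_{r_{\tau(k)},r_k}$; summing over the remaining $\vec r$ then glues the $B_k$'s along the cycles of $\tau$, yielding exactly $\Tr_{\tau^{-1}}(B_1,\dots,B_n)$ in the notation of the statement (the inverse appearing because the second index of $B_k$ is identified with the first index of $B_{\tau^{-1}(k)}$). Independently, summing over $\vec p$ identifies $p_k=s_{\sigma(k)}$ and hence $p_{k+1}=s_{\sigma\gamma_n(k)}$, so the $A$-factors become $\prod_k (A_k)_{s_k,\,s_{\sigma\gamma_n(k)}}$; summing over $\vec s$ glues the $A_k$'s along the cycles of $\sigma\gamma_n$, yielding $\Tr_{\sigma\gamma_n}(A_1,\dots,A_n)$ — the extra cyclic shift $\gamma_n$ being precisely the contribution of the outermost trace that closes the chain $A_nUB_1U^\dagger A_1\cdots$. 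Renaming $(\sigma,\tau)$ as $(\alpha,\beta)$ turns the double sum into $\sum_{\alpha,\beta}\operatorname{Wg}(\beta\alpha^{-1},d)\,\Tr_{\beta^{-1}}(B_1,\dots,B_n)\,\Tr_{\alpha\gamma_n}(A_1,\dots,A_n)$, which is \eqref{eq:wg_lem}. (If one writes the monomial formula instead with $\operatorname{Wg}(\sigma\tau^{-1},d)$ or $\operatorname{Wg}(\sigma^{-1}\tau,d)$, these agree, since $\operatorname{Wg}(\cdot,d)$ is a real, inversion-invariant class function.)

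The step I expect to be the main obstacle is the bookkeeping in these two contractions: one must verify that the chain of $A_k$'s really closes up along $\sigma\gamma_n$ rather than $\sigma$ or $\gamma_n\sigma$, that the $B_k$'s chain with the orientation that produces $\tau^{-1}$ rather than $\tau$, and that inside each resulting cycle the matrices are multiplied in the order dictated by the definition of $\Tr_\sigma(M_1,\dots,M_n)$ in the statement. Since $\operatorname{Wg}$ is a class function, this cycle-to-trace dictionary together with the placement of $\gamma_n$ is the only genuine content; once it is verified on a single $n$-cycle, the general case follows cycle by cycle, and the identity is obtained by matching the coefficient of each $\operatorname{Wg}(\pi,d)$ on the two sides.
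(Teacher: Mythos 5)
Your derivation is correct: expanding the trace in coordinates, applying the Collins--\'Sniady monomial formula, and contracting the deltas does give the $B$-chain along the cycles of $\tau^{-1}$ and the $A$-chain along $\sigma\gamma_n$ (the shift $\gamma_n$ coming from the closing of the outer trace via $p_{n+1}=p_1$), which is exactly \eqref{eq:wg_lem} with $(\alpha,\beta)=(\sigma,\tau)$. Note that the paper itself does not prove this lemma --- it is imported verbatim from the cited reference \cite{gu2013moments} --- and your argument is essentially the standard proof given there, so there is nothing to reconcile beyond the composition/cycle-order conventions you already flagged and verified.
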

We will also need some values of Weingarten function.
\begin{lemma}[\cite{collins2006integration}]\label{lem:wg2}
The function $\operatorname{Wg}(\pi,d)$ has the following values:
%\begin{multicols}{3}
\begin{itemize}
    \item $\operatorname{Wg}((1),d)=\frac{1}{d}$,
    \item $\operatorname{Wg}((12),d)=\frac{-1}{d(d^2-1)}$,
    \item $\operatorname{Wg}((1)(2),d)=\frac{1}{d^2-1}$.
\end{itemize}
%\end{multicols}
\end{lemma}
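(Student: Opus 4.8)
The plan is to prove Lemma~\ref{lem:wg2} directly from the definition of the Weingarten function as the (pseudo-)inverse of the Gram matrix of permutation operators, so that the entire argument reduces to a $1\times 1$ and a $2\times 2$ matrix inversion. First I would recall the relevant characterisation: for $U\sim\Haar({\rm U}(d))$, the Weingarten function $\operatorname{Wg}(\,\cdot\,,d)$ is the function on $S_n$ such that the $n!\times n!$ matrix $W_n\coloneqq\big[\operatorname{Wg}(\sigma\tau^{-1},d)\big]_{\sigma,\tau\in S_n}$ is the inverse of the Gram matrix $G_n\coloneqq\big[\Tr(P_{\sigma}P_{\tau}^{-1})\big]_{\sigma,\tau\in S_n}=\big[d^{\,c(\sigma\tau^{-1})}\big]_{\sigma,\tau\in S_n}$, where $P_\pi$ permutes the tensor factors of $(\C^d)^{\otimes n}$, $c(\pi)$ is its number of cycles, and one uses the elementary identity $\Tr(P_\pi)=d^{\,c(\pi)}$. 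This holds as a genuine matrix inverse whenever $d\ge n$; for $d<n$ one uses the Moore--Penrose pseudo-inverse instead, but this case is irrelevant here since the lemma is only ever applied with $d=rd_B\ge 2$. Hence it suffices to invert $G_1$ and $G_2$.

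\textbf{Case $n=1$.} Here $S_1=\{e\}$ and $c(e)=1$, so $G_1=(d)$ is the $1\times1$ matrix, and $\operatorname{Wg}((1),d)=G_1^{-1}=\tfrac1d$.

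\textbf{Case $n=2$.} Here $S_2=\{e,(12)\}$ with $c(e)=2$ (cycle type $1^2$, i.e.\ the permutation written $(1)(2)$) and $c\big((12)\big)=1$. Ordering the basis as $\big(e,(12)\big)$ gives
\[
G_2=\begin{pmatrix}d^2&d\\ d&d^2\end{pmatrix},\qquad \det G_2=d^4-d^2=d^2(d^2-1),
\]
so for $d\ge 2$
\[
W_2=G_2^{-1}=\frac{1}{d^2(d^2-1)}\begin{pmatrix}d^2&-d\\ -d&d^2\end{pmatrix}.
\]
Reading off the diagonal and off-diagonal entries yields $\operatorname{Wg}\big((1)(2),d\big)=\frac{d^2}{d^2(d^2-1)}=\frac{1}{d^2-1}$ and $\operatorname{Wg}\big((12),d\big)=\frac{-d}{d^2(d^2-1)}=\frac{-1}{d(d^2-1)}$, as claimed. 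I would also record a cross-check using the Schur--Weyl character-sum formula $\operatorname{Wg}(\pi,d)=\tfrac{1}{(n!)^2}\sum_{\lambda\vdash n,\ \ell(\lambda)\le d}\tfrac{(f^\lambda)^2\,\chi^\lambda(\pi)}{s_{\lambda,d}}$ (with $f^\lambda$ the dimension of the $S_n$-irrep and $s_{\lambda,d}$ that of the corresponding ${\rm U}(d)$-irrep): for $n=2$ only $\lambda=(2)$ and $\lambda=(1,1)$ contribute, with $f^\lambda=1$, $s_{(2),d}=\binom{d+1}{2}$, $s_{(1,1),d}=\binom{d}{2}$, $\chi^{(2)}\equiv 1$, and $\chi^{(1,1)}(e)=1$, $\chi^{(1,1)}\big((12)\big)=-1$; substituting and simplifying the two-term sum reproduces exactly the same expressions.

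There is no genuine obstacle here --- the computation is elementary linear algebra. The only points requiring care are, first, fixing the normalisation convention so that no spurious factor of $n!$ enters (this is pinned down by the $n=1$ relation $d\cdot\operatorname{Wg}((1),d)=1$), and second, being explicit that these closed forms are the correct Weingarten values precisely in the regime $d\ge n$ that is used throughout the paper, so that the $d<n$ degeneracy of the Gram matrix never intervenes.
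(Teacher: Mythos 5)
Your proposal is correct: the values you obtain for $\operatorname{Wg}((1),d)$, $\operatorname{Wg}((12),d)$ and $\operatorname{Wg}((1)(2),d)$ are the standard ones, your $2\times 2$ Gram-matrix inversion is carried out correctly (with $\Tr(P_\pi)=d^{\,c(\pi)}$ giving $G_2=\lsmatrix d^2 & d\\ d & d^2\rsmatrix$), and your cross-check against the character-sum formula with $\lambda=(2)$ and $\lambda=(1,1)$ indeed reproduces the same expressions. The difference from the paper is one of route rather than content: the paper treats this lemma as an imported fact, citing Collins--\'Sniady without any proof, whereas you rederive the values from the defining property of the Weingarten function as the (pseudo-)inverse of the Gram matrix of permutation operators on $(\C^d)^{\otimes n}$. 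Your version buys self-containedness and makes explicit the convention issues (normalisation fixed by the $n=1$ relation, and validity of the genuine inverse only for $d\ge n$), which matters because the paper applies the lemma with $n=2$ and $d=rd_B$; under the paper's standing assumption $d_Ad_Br\ge 2500$ one has $rd_B\ge 2$ in both regimes, so your caveat about the degenerate case $d<n$ is correctly noted as never arising. The citation-only approach of the paper is of course shorter and standard practice for such classical values; your derivation is the natural one a reader would perform to verify them.
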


\end{document}